\documentclass[11pt,a4paper]{article}

\usepackage[T1]{fontenc}
\usepackage{lmodern}
\usepackage[margin=1in]{geometry}
\usepackage{amsmath,amssymb,amsthm,graphicx,placeins}
\graphicspath{{figures/}}
\usepackage{booktabs}
\usepackage{csquotes}
\usepackage{microtype}
\usepackage[authoryear,round]{natbib}
\usepackage{authblk}
\usepackage[hidelinks]{hyperref}
\hypersetup{
  pdftitle={The Complexity of Value Optimization in Sporadic Task Systems},
  pdfauthor={Christopher Harbourne; Nathan Fisher; Kecheng Yang; Pontus Ekberg; Sanjoy Baruah; Zhishan Guo}
}

\theoremstyle{plain}
\newtheorem{theorem}{Theorem}
\newtheorem{lemma}{Lemma}

\theoremstyle{definition}
\newtheorem{definition}{Definition}

\newcommand{\dbf}{\operatorname{dbf}}

\newcommand{\VTS}{\text{\normalfont\textsc{vts}}}

\title{The Complexity of Value Optimization in Sporadic Task Systems}
\author[1]{Christopher Harbourne}
\author[1]{Nathan Fisher}
\author[2]{Kecheng Yang}
\author[3]{Pontus Ekberg}
\author[4]{Sanjoy Baruah}
\author[5]{Zhishan Guo}
\affil[1]{Wayne State University, Detroit, Michigan, USA\protect\\
  \href{mailto:gf4350@wayne.edu}{\nolinkurl{gf4350@wayne.edu}}, \href{mailto:fishern@wayne.edu}{\nolinkurl{fishern@wayne.edu}}}
\affil[2]{Texas State University, San Marcos, Texas, USA\protect\\
  \href{mailto:yangk@txstate.edu}{\nolinkurl{yangk@txstate.edu}}}
\affil[3]{Uppsala University, Uppsala, Sweden\protect\\
  \href{mailto:pontus.ekberg@it.uu.se}{\nolinkurl{pontus.ekberg@it.uu.se}}}
\affil[4]{Washington University in St. Louis, St. Louis, Missouri, USA\protect\\
  \href{mailto:baruah@wustl.edu}{\nolinkurl{baruah@wustl.edu}}}
\affil[5]{North Carolina State University, Raleigh, North Carolina, USA\protect\\
  \href{mailto:zguo32@ncsu.edu}{\nolinkurl{zguo32@ncsu.edu}}}
\date{}

\begin{document}
\maketitle

\begin{abstract}
Value-based Task Selection (\(\VTS\)) asks whether a real-time system can select tasks whose total value reaches a required target while preserving every deadline. We study its threshold decision form for finite collections of independent sporadic tasks scheduled preemptively on one processor. We prove that this problem is \(\Sigma_2^P\)-complete for both constrained and arbitrary deadlines, even when all numerical parameters are encoded in unary. The existential choice is the selected task set, while the universal condition requires every processor-demand window to remain feasible. Our hardness reduction already uses constrained-deadline tasks. It represents universal assignments by Chinese Remainder Theorem timestamps and uses polynomial-weight local selector locks to enforce a valid existential assignment. The reduction produces polynomially many tasks, and every generated execution time, deadline, period, task value, and target is polynomially bounded in the source-instance size.
\end{abstract}

\noindent\textbf{Keywords:} real-time scheduling, sporadic tasks, task selection, value optimization, demand bound function, polynomial hierarchy

\section{Introduction}

Choosing a valuable task set and deciding whether a fixed task set is feasible are often treated as separate questions. We study the combined problem: given a finite collection of independent sporadic tasks on one preemptive processor, each with an associated value, can one select a subset whose total value reaches a prescribed target and whose jobs all meet their deadlines under every legal release sequence? The selection is made offline, with the entire collection available. Once a task is selected, every job of that task must retain its hard-deadline guarantee. We show that this task-selection problem, which we call the \emph{Value-based Task Selection (VTS)} problem, has the existential-universal structure of the second level of the polynomial hierarchy.

The central result is that VTS is \(\Sigma_2^P\)-complete for both constrained and arbitrary deadlines, even when all numerical parameters are encoded in unary. Optional selector tasks encode an existential choice, while demand windows expose possible counterexamples to that choice. For a well-formed selection, the construction creates more required execution than available processor time in a window exactly when the encoded choice has a logical counterexample. The remainder of this section develops this interpretation before stating the contributions and organization of the paper.

\subsection{Motivation}

Real-time systems may contain more useful functionality than their available processing capacity can support. A concrete design-time setting is the configuration of an embedded platform whose recurrent tasks implement optional monitoring, diagnostic, or quality-enhancement functions. Enabling a feature selects its entire sporadic task: every job must meet its hard deadline under every legal release sequence. Disabling the feature removes the complete recurrent workload. Selecting more functionality increases total value, but the resulting workload must still meet every hard deadline.

Utility-driven task selection also appears in cloud computing. \citet{li2016utility} select periodic tasks and their parallelization options to maximize total utility on multiple processing units, under implicit deadlines. That model differs from our sequential sporadic tasks on one processor, but it provides a concrete example of choosing which recurrent workloads to retain under limited processing capacity.

For implicit-deadline tasks on one preemptive processor, each selected task consumes a single utilization share, giving the scalar-capacity structure of 0-1 knapsack. With constrained deadlines, the same selected set must instead satisfy a family of processor-demand inequalities, one for each relevant interval length. The standard single-capacity knapsack formulation therefore does not directly supply algorithms or approximation guarantees for this setting.

This task-selection problem also connects to multiple-implementation selection (MIS), where exactly one implementation is chosen per functionality. If MIS permits a disabled option with zero execution and zero value, each task becomes a two-option choice: select it or disable its functionality. Section~\ref{sec:related-work} gives this embedding and the resulting completeness classification under the same scheduling assumptions.

\subsection{Problem Overview}

Following the sporadic task model introduced by \citet{mok1983fundamental}, the input is an indexed multiset of tasks \[ \tau_i=(C_i,D_i,T_i,V_i), \] where \(C_i\), \(D_i\), and \(T_i\) are execution time, relative deadline, and minimum inter-arrival separation, and \(V_i\) is the value of selecting \(\tau_i\). We consider independent tasks on one preemptive unit-speed processor. Earliest deadline first (EDF) is optimal under these assumptions. The formal model appears in Section~\ref{sec:model}.

We study the threshold decision form because completeness is defined for decision problems: the VTS problem determines whether the maximum feasible value is at least \(K\). For a selected indexed submultiset \(A\), write \[ V(A)=\sum_{\tau_i\in A}V_i. \] In the implicit-deadline case \(D_i=T_i\), EDF feasibility is equivalent to \[ \sum_{\tau_i\in A}\frac{C_i}{T_i}\le1. \] Therefore tasks act as knapsack items, their values are \(V_i\), and their sizes are utilization shares \(C_i/T_i\). This is the classical 0-1 knapsack structure, for which exact and approximation algorithms are well studied \citep{dantzig1957discrete,garey1979computers,kellerer2004knapsack}.

Constrained deadlines replace this single capacity constraint with interval-dependent demand constraints. The \emph{demand bound function} (DBF), denoted \(\dbf(A,\ell)\), is the maximum execution demand of jobs of \(A\) that can both arrive and have their deadlines within an interval of length \(\ell\). Feasibility requires \(\dbf(A,\ell)\le\ell\) for every \(\ell>0\) (e.g., see~\cite{baruah1990sporadic}); Section~\ref{sec:model} gives the formula and exact criterion.

\subsection[Why the Problem Has a Sigma-2-P Shape]{Why the Problem Has a
\(\Sigma_2^P\) Shape}

Throughout this explanation, a proposed solution or counterexample is called \emph{short} if its encoding length is bounded by a polynomial in the input size. The predicate \(R\) is computable in polynomial time. The polynomial hierarchy extends NP and coNP through successive levels with a fixed number of alternations between existential and universal choices \citep{stockmeyer1976ph, wrathall1976complete}. For an NP problem, an instance $I$'s membership is confirmed by exhibiting one short proposed solution \(w\) and checking it in polynomial time:
\(\exists w:\ R(I,w)=1\).
For a coNP problem, membership is confirmed only if every short proposed counterexample \(w\) fails to break it:
\(\forall w:\ R(I,w)=1\).
The second existential level, \(\Sigma_2^P\), combines these two ideas:
\[ \exists w_1\;\forall w_2:\quad R(I, w_1, w_2)=1.\]
In plain terms, there is a short candidate solution that remains valid under every short challenge. For VTS, after choosing tasks whose value reaches the target, one question remains: can this set meet every deadline? Imagine being able to hand the set to a checker and get a correct yes-or-no answer. Complexity theory calls this assumed checker an \emph{oracle}. We could ask \enquote{Is the set feasible?} or instead ask \enquote{Is there a demand window that overloads the processor?} A yes to one means a no to the other. The feasibility question is in coNP; the overload question is in NP. Either answer therefore lets us judge our chosen set, giving the two equivalent descriptions
\[ \Sigma_2^P=\mathrm{NP}^{\mathrm{coNP}}=\mathrm{NP}^{\mathrm{NP}}. \]
\citet{ekberg2021partitioned} discuss these views and their relevance to real-time scheduling.

Figure~\ref{fig:ph-orientation} illustrates the levels of the polynomial hierarchy and how each one builds on the levels below it.

\begin{figure}[ht]
\centering
\includegraphics[width=0.97\linewidth]{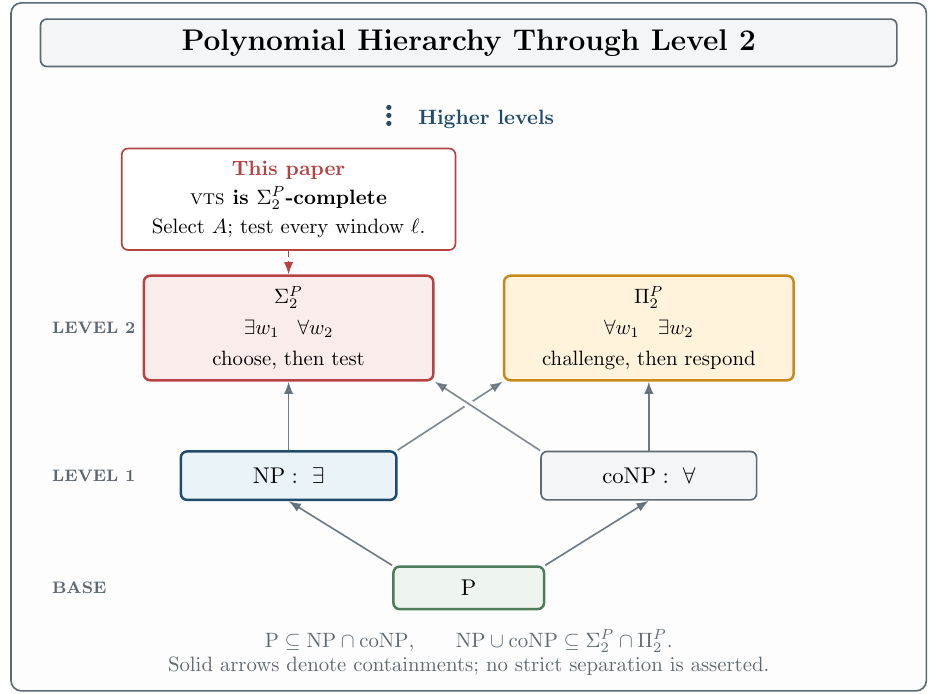}
\caption{The polynomial hierarchy through its second level. The enclosing
boundary places all displayed classes inside PH, and the continuation cue
indicates higher levels. Solid arrows show class containments; no strict
separation is asserted. The separate callout identifies this paper's result:
\(\VTS\) is \(\Sigma_2^P\)-complete. Its dashed pointer identifies the
relevant class and does not represent a class-containment relation.}
\label{fig:ph-orientation}
\end{figure}

The order of the choices distinguishes the two branches: \(\Sigma_2^P\) chooses first and then tests every response; \(\Pi_2^P\) reverses that order. VTS follows the first pattern.

The \(\VTS\) decision problem has the form:
\[ \exists A\subseteq\Gamma:\quad \bigl[V(A)\ge K\bigr]\ \land\ \bigl[\forall\ell>0:\ \dbf(A,\ell)\le\ell\bigr]. \]
The selected set \(A\) is the existential choice. Once \(A\) is fixed, one interval satisfying \(\dbf(A,\ell)>\ell\) is a counterexample to its feasibility. The display describes the scheduling condition; Section~\ref{sec:membership} establishes the polynomial-length counterexample bound needed for the complexity classification.

A \(\Sigma_2^P\)-completeness result is therefore stronger than separate NP- and coNP-hardness results. Unless the polynomial hierarchy collapses, the problem lies in neither NP nor coNP.

\subsection{Contributions}
\begin{itemize}
    \item We formalize the threshold decision form of Value-based Task Selection for sporadic tasks on one preemptive processor: every selected task executes in full, and the selected set must reach a prescribed value target.

    \item We prove that \(\VTS\) is \(\Sigma_2^P\)-complete for both constrained and arbitrary deadlines. The constrained-deadline hardness construction uses demand windows indexed via the Chinese Remainder Theorem (CRT) to represent universal responses and a polynomial-weight local selector lock to enforce one existential truth choice per variable.

    \item We prove that the construction produces polynomially many tasks and polynomially bounded numerical parameters. As a result, completeness also holds when all numerical parameters are encoded in unary.
\end{itemize}

\subsection{Organization}
Section~\ref{sec:related-work} reviews related work, and Section~\ref{sec:model} defines the model and decision problem. Section~\ref{sec:membership} proves membership in \(\Sigma_2^P\). Section~\ref{sec:reduction-overview} introduces the quantified source problem, summarizes the complete construction, and states the reduction invariant. Sections~\ref{sec:gadgets} and~\ref{sec:selector-lock} develop the demand gadgets and selector lock. Sections~\ref{sec:construction} and~\ref{sec:correctness} give the detailed construction and correctness proof, and Section~\ref{sec:boundedness} proves the task-count and numerical-parameter bounds. Section~\ref{sec:main-result} states the completeness theorem. Sections~\ref{sec:discussion} and~\ref{sec:conclusion} discuss its implications and conclude.

\section{Related Work}
\label{sec:related-work}

Four strands of prior work provide the closest context for this problem: real-time admission control, optional or imprecise execution, exact uniprocessor feasibility analysis, and polynomial-hierarchy classifications in real-time scheduling. They differ in what is selected, when the selection is made, or which scheduling platform is studied.

\subsection{Admission Control and Value-Aware Scheduling}

Classical real-time admission control decides whether newly arriving work can be accepted without invalidating guarantees already made to admitted work \citep{chetto1989scheduling,andersson2005admission}. The problem studied here instead receives the entire task collection at design time and selects a feasible subset whose fixed task values reach a target.

Value-based overload scheduling treats completed work as an optimization objective. \citet{burns2000meaning} survey the role of value in flexible real-time scheduling, while \citet{koren1992dover} study online utility-maximizing admission under overload. More recent work studies value-aware scheduling for transportation applications and response-time-dependent utility accrual \citep{yang2026value,chen2025utility}. These models provide related utility objectives; they do not establish the complexity of the offline whole-task selection problem studied here.

Knapsack formulations also arise in other real-time resource decisions. \citet{mejia2004adaptive} choose among discrete processor speeds to improve energy savings under timing constraints. \citet{goossens2013partitioned} use a knapsack subproblem to bound a mode-transition delay in partitioned multimode scheduling. The objectives and scheduling models differ from ours; these examples illustrate the use of scalar-capacity selection subproblems in real-time analysis.

Multiple-choice knapsack selects exactly one alternative from each group \citep{sinha1979multiple}. A related multiple-implementation selection (MIS) model chooses exactly one implementation per functionality. For example, alternative classifiers for the same classification task may trade execution demand against utility. For MIS models permitting an explicit disabled implementation, task selection is the special case with two alternatives per functionality: the given task, or a disabled option with zero execution and zero value. The latter is an option of the extended model, not a task in our positive-parameter domain. For explicitly listed alternatives under the same scheduling assumptions, this MIS threshold decision problem is \(\Sigma_2^P\)-complete for both constrained and arbitrary deadlines. Hardness follows from the two-option embedding. For membership, guess exactly one implementation per functionality, check the total value, and apply the feasibility check to the selected implementations (see Theorem~\ref{thm:membership}).

\subsection{Optional and Imprecise Execution}

Imprecise-computation and reward-based models divide work into mandatory and optional portions, allowing result quality to improve as optional execution is completed \citep{liu1991imprecise,shih1996online,aydin2001reward}. These models optimize how much optional service to execute after preserving mandatory timing guarantees. In contrast, our decision is made at task granularity: a selected sporadic task executes in full, and all of its jobs must meet hard deadlines.

\subsection{Feasibility and Complexity of Uniprocessor Scheduling}

For preemptive uniprocessor scheduling of constrained-deadline sporadic tasks, processor demand gives a necessary and sufficient feasibility condition \citep{baruah1990sporadic}. The complexity of evaluating such conditions was clarified in stages. Eisenbrand and Rothvo{\ss} proved weak coNP-hardness for EDF schedulability of synchronous periodic task systems, and Ekberg and Yi later established strong coNP-completeness for constrained-deadline sporadic feasibility \citep{eisenbrand2010edf,ekberg2015constrained}. Related difficulty also appears under static-priority scheduling: exact response-time computation is NP-hard \citep{eisenbrand2008response}.

Our reduction builds on the modular structure used in the constrained-deadline feasibility result. CRT-based constructions identify demand windows whose residue patterns represent Boolean assignments; the constructed deadline offsets turn matching patterns into controlled processor-demand spikes.

\subsection{Polynomial-Hierarchy Results in Real-Time Scheduling}

\citet{ekberg2021partitioned} classify a broad family of partitioned multiprocessor schedulability problems rather than treating them uniformly as
NP-hard. For asynchronous periodic task systems scheduled by EDF, they prove that the constrained- and arbitrary-deadline partitioning problems are
\(\Sigma_2^P\)-complete, for both a constant and an arbitrary number of processors. For fixed-priority scheduling, several corresponding problems are
\(\Sigma_2^P\)-hard but have only a \(\Sigma_3^P\) upper bound; exact classification therefore remains open in those cases. They also identify
fixed-priority implicit-deadline cases whose classification depends on whether the priority order is specified.

\citet{agrawal2023rethinking} relate exact, pseudo-polynomial, and approximation-based notions of tractability in real-time schedulability analysis. They also analyze a memory-constrained task-selection problem to illustrate limits of approximation schemes. That selection model and objective differ from the hard-deadline value-threshold problem studied here.

Our result exhibits the same higher-level quantifier structure without a partitioning decision or multiple processors. Here the existential choice is
which tasks to select, and the universal condition requires the resulting single-processor task set to satisfy every processor-demand constraint.

\section{Model and Problem Definition}
\label{sec:model}

This section fixes the model before any complexity argument begins. It defines sporadic tasks, deadline restrictions, and the assumptions under which EDF is optimal, introduces the demand bound function and its exact feasibility criterion, and then defines the value-threshold decision problem.

\subsection{Task Model}
\begin{definition}[Sporadic Task]
    A task is a tuple \[ \tau_i=(C_i, D_i, T_i, V_i) \] where \(C_i\) is execution time, \(D_i\) is relative deadline, \(T_i\) is minimum inter-arrival separation, and \(V_i\) is the value obtained if \(\tau_i\) is selected. Each job requires at most \(C_i\) units of execution and has an absolute deadline \(D_i\) time units after its release. Consecutive releases are separated by at least \(T_i\) time units. All four parameters are positive integers, with \(C_i\le D_i\).
\end{definition}

Deadlines are \emph{implicit} if \(D_i=T_i\) for every task and \emph{constrained} if \(D_i\le T_i\) for every task. The \emph{arbitrary-deadline} model imposes no relation between \(D_i\) and \(T_i\). The hardness construction below produces only constrained-deadline tasks; however, since the arbitrary-deadline model is a generalization of constrained-deadline model, the hardness result for constrained-deadlines immediately implies a hardness result for arbitrary deadlines.

A release sequence \(r_{i,0},r_{i,1},\ldots\) is \emph{legal} for \(\tau_i\) if \(r_{i,0}\ge0\) and \(r_{i,k+1}-r_{i,k}\ge T_i\) for every consecutive pair of releases. A collection of release sequences is legal if this condition holds for every task.

Tasks are independent, jobs are fully preemptible, and the processor has unit speed. A task system is \emph{feasible} if some scheduling algorithm meets every deadline for every legal collection of release sequences, with each job requiring at most its task's execution-time bound. Earliest deadline first is optimal under these assumptions \citep{dertouzos1974control}, so feasibility and EDF-schedulability are equivalent.

\subsection{Demand Bound Function}
For a selected indexed submultiset \(A\), define its \emph{demand bound function} (DBF) by
\[ \dbf(A,\ell) = \sum_{\tau_i\in A} \max\left\{0,\left\lfloor\frac{\ell-D_i}{T_i}\right\rfloor+1 \right\}C_i. \]
Throughout the paper, \(A\) denotes a selected task set. For each task, the floor term counts the largest number of jobs that can both arrive and have their deadlines within a window of length \(\ell\). Multiplying that count by \(C_i\) gives the corresponding execution demand. Summing over the selected tasks therefore measures the maximum processor demand that must fit inside the window.

The utilization of \(A\) is
\[ U(A)=\sum_{\tau_i\in A}\frac{C_i}{T_i}. \]
Every term of \(\dbf(A, \ell)\) and \(U(A)\) is nonnegative, so demand and utilization are monotone in the selected set. In particular, removing tasks from a feasible set preserves feasibility. Therefore additional tasks can increase total value while making the selected set infeasible.

\subsection{Feasibility Criterion}
\begin{theorem}[Demand-bound feasibility criterion;
\citet{baruah1990sporadic}]
\label{thm:dbf-criterion}
A selected sporadic task system \(A\) is feasible on a preemptive uniprocessor if and only if
\[ U(A) \le 1 \quad\text{and}\quad \dbf(A, \ell)\le \ell \text{ for all } \ell>0. \]
\end{theorem}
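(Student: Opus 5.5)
We prove the two directions separately, using EDF optimality \citep{dertouzos1974control}, so that feasibility means that EDF meets every deadline under every legal release collection.

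\emph{Necessity.} Suppose first that \(\dbf(A,\ell)>\ell\) for some \(\ell>0\). We build a release pattern that overloads the window \([0,\ell)\). For each task \(\tau_i\in A\), release jobs at times \(0,T_i,2T_i,\ldots\) and let each job execute for exactly \(C_i\). The number of these jobs whose deadlines \(kT_i+D_i\) lie at or before \(\ell\) is \(\max\{0,\lfloor(\ell-D_i)/T_i\rfloor+1\}\). All of these jobs must therefore complete inside \([0,\ell]\), so the processor must supply \(\dbf(A,\ell)>\ell\) units of work in \(\ell\) units of time, which is impossible. Next suppose \(U(A)>1\). We use the same synchronous pattern and note that \(\dbf(A,\ell)\ge \ell\,U(A)-\sum_{\tau_i\in A}C_i\bigl(D_i/T_i\bigr)\), up to additive constants. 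The right-hand side grows linearly in \(\ell\) with slope \(U(A)>1\), so it eventually exceeds \(\ell\), and the first case applies. In fact, this argument shows that the utilization condition is implied by the demand condition.

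\emph{Sufficiency.} Suppose both conditions hold but some legal release collection makes EDF miss a deadline. Let \(t_d\) be the first missed absolute deadline. Define \(t_0\) as the last time before \(t_d\) at which the processor is either idle or executing a job whose deadline exceeds \(t_d\). Take \(t_0=0\) if no such time exists, with the convention that \(t_0\) is the end of that idle or low-priority stretch. Throughout \([t_0,t_d)\), EDF executes only jobs that are released at or after \(t_0\) and have deadlines at or before \(t_d\). By the choice of \(t_0\), no such job is pending at \(t_0\) from earlier, and a deadline is missed at \(t_d\). Hence the total execution requirement of jobs released in \([t_0,t_d)\) with deadlines in \((t_0,t_d]\) exceeds \(\ell:=t_d-t_0\).

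We then bound this requirement using only the separation constraint. For each task, the jobs in question are released at least \(T_i\) apart, and their deadlines lie in a window of length \(\ell\). So there are at most \(\lfloor(\ell-D_i)/T_i\rfloor+1\) of them, and none when \(\ell<D_i\). Each contributes at most \(C_i\). The requirement is therefore at most \(\dbf(A,\ell)\le\ell\), which is a contradiction.

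The main obstacle is the careful choice of \(t_0\). We must argue that no carry-in work from before \(t_0\) with deadline at most \(t_d\) remains pending. If such a job existed, EDF would have preferred it over the idle or later-deadline execution just before \(t_0\), contradicting the definition of \(t_0\). Once this step is settled, the counting bound is routine. The utilization condition plays no separate role in the proof; it is stated in the theorem only because it is what permits restricting \(\ell\) to a bounded range later, in Section~\ref{sec:membership}.
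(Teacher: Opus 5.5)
Your proof is correct: the synchronous minimum-separation release pattern gives necessity (and your bound \(\dbf(A,\ell)\ge \ell\,U(A)-\sum_{\tau_i\in A}C_iD_i/T_i\) shows that the utilization condition is implied by the demand condition), while the EDF busy-interval argument from the first missed deadline gives sufficiency. The paper does not reprove this theorem but cites \citet{baruah1990sporadic}, and your argument is essentially the classical one from that source.
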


The DBF is a right-continuous step function of the interval length \(\ell\). Its possible discontinuities occur at \(D_i+kT_i\), for \(\tau_i\in A\) and integers \(k\ge0\). These are deadlines in a synchronous minimum-separation release pattern; an arbitrary sporadic release sequence can have different absolute deadlines.

\begin{lemma}[Sufficiency of DBF discontinuities; cf.\ \citet{baruah1990sporadic}]
\label{lem:jump}
If \(\dbf(A,\ell)>\ell\) for some \(\ell>0\), then there exist \(\tau_i\in A\), an integer \(k\ge0\), and \(\ell'=D_i+kT_i\le\ell\) such that \(\dbf(A,\ell')>\ell'\).
\end{lemma}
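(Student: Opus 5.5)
Since \(\dbf(A,\cdot)\) is a nondecreasing, right-continuous step function whose jumps lie only at the points \(D_i+kT_i\), the plan is to step back from \(\ell\) to the last jump point at or before \(\ell\). The demand there equals the demand at \(\ell\), while the interval length is no larger.

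First, fix \(\ell>0\) with \(\dbf(A,\ell)>\ell\). Note that \(\dbf(A,\ell)>0\), since \(\ell>0\). Hence some term in the sum is positive, so there exist \(\tau_i\in A\) with \(\lfloor(\ell-D_i)/T_i\rfloor+1\ge1\), that is, \(D_i\le\ell\). Consequently the set
\[ S=\{D_j+kT_j : \tau_j\in A,\ k\ge0 \text{ an integer},\ D_j+kT_j\le\ell\} \]
is nonempty. It is also finite, because every \(T_j\ge1\) and \(A\) is finite. Let \(\ell'=\max S\); this is realized as \(\ell'=D_i+kT_i\) for some \(\tau_i\in A\) and integer \(k\ge0\), and \(0<\ell'\le\ell\) because \(D_i\ge C_i\ge1\).

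Next, show \(\dbf(A,\ell')=\dbf(A,\ell)\) by comparing each task's term. For each \(\tau_j\in A\), the quantity \(n_j(x)=\max\{0,\lfloor(x-D_j)/T_j\rfloor+1\}\) counts the integers \(k\ge0\) with \(D_j+kT_j\le x\). Every such point with \(D_j+kT_j\le\ell\) lies in \(S\), so it is \(\le\ell'\). Therefore the counts coincide, \(n_j(\ell')=n_j(\ell)\), and summing with weights \(C_j\) gives equality of the demand bound functions. Then
\[ \dbf(A,\ell')=\dbf(A,\ell)>\ell\ge\ell', \]
which proves the lemma.

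There is no real obstacle here. The only points needing care are that \(S\) is nonempty, which uses \(\dbf(A,\ell)>0\), and that \(S\) is finite so its maximum exists. The counting interpretation of the floor expression makes the termwise equality immediate.
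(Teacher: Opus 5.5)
Your proof is correct and matches the paper's argument: the paper gives no separate proof, deferring to Baruah et al.\ and to its preceding remark that \(\dbf(A,\cdot)\) is a right-continuous step function whose only jumps occur at the points \(D_i+kT_i\). Your step back to the largest such point \(\ell'\le\ell\) (the set is nonempty because \(\dbf(A,\ell)>0\) forces some \(D_i\le\ell\)), together with the termwise count equality \(n_j(\ell')=n_j(\ell)\), is the rigorous form of that remark; the only cosmetic issue is that naming the set of jump points \(S\) clashes with the paper's local budget constant \(S\).
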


\subsection[VTS Decision Problem]{\(\VTS\) Decision Problem}

\begin{definition}[\(\VTS\)]
An instance consists of a finite indexed multiset \(\Gamma\) of sporadic tasks and a nonnegative integer value target \(K\). The question is whether there exists a selected indexed submultiset \(A\subseteq\Gamma\) such that
\[
    V(A)=\sum_{\tau_i\in A}V_i\ge K
    \qquad\text{and}\qquad
    \dbf(A,\ell)\le\ell\quad\text{for every }\ell>0.
\]
The constrained-deadline restriction additionally requires \(D_i\le T_i\) for every input task.
\end{definition}

The DBF condition already implies \(U(A)\le1\), since
\[
    \lim_{\ell\to\infty}\frac{\dbf(A,\ell)}{\ell}=U(A).
\]

A selection \(A\) \emph{meets the value target} when \(V(A)\ge K\). This value condition alone does not imply that \(A\) is feasible.

With the decision problem fixed, the next section proves the upper bound needed
for the completeness result.

\section[Membership in Sigma-2-P]{Membership in \(\Sigma_2^P\)}
\label{sec:membership}

The upper-bound argument separates the same two roles used in the hardness construction. A selection vector supplies the existential choice. Once that choice is fixed, one polynomial-length demand window can verify infeasibility, so feasibility is a coNP condition.

\begin{theorem}
\label{thm:membership}
\(\VTS\in \Sigma_2^P\) for constrained and arbitrary deadline sporadic task systems.
\end{theorem}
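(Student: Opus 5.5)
We prove Theorem~\ref{thm:membership} by writing \(\VTS\) in the form \(\exists w_1\,\forall w_2: R(I,w_1,w_2)=1\), with both certificates of polynomial length and \(R\) computable in polynomial time. The existential certificate \(w_1\) is a bit vector of length \(|\Gamma|\) that selects an indexed submultiset \(A\). The universal certificate \(w_2\) is a candidate overload window, namely a pair \((i,k)\) naming \(\tau_i\in A\) and an integer \(k\ge0\), which determines \(\ell'=D_i+kT_i\). The predicate \(R\) accepts when \(V(A)\ge K\) and either \(\tau_i\notin A\), or \(k\) exceeds a fixed bound \(k_{\max}\), or \(\dbf(A,\ell')\le\ell'\). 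Given \(A\) and \(\ell'\) in binary, \(\dbf(A,\ell')\) is a sum of \(|A|\) terms, each computed by one division, one floor, and one multiplication on polynomial-length integers. So \(R\) runs in polynomial time.

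Correctness comes in two directions. \emph{Soundness.} If \(A\) is a \(\VTS\) witness, then \(\dbf(A,\ell)\le\ell\) holds for all \(\ell\), so \(R\) accepts every \(w_2\). \emph{Completeness.} Suppose instead that \(V(A)\ge K\) but \(\dbf(A,\ell)>\ell\) for some \(\ell>0\). Lemma~\ref{lem:jump} then gives an overloaded discontinuity \(\ell'=D_i+kT_i\), and we need one with \(k\le k_{\max}\), where \(\log k_{\max}\) is polynomial in the input size. This bound is the main obstacle.

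To get it, split on utilization. \textbf{Case \(U(A)>1\).} Write \(\dbf(A,\ell)\ge U(A)\ell-\sum_{\tau_j\in A}C_j\). Choose \(\ell\) just beyond \(\sum_j C_j/(U(A)-1)\), and bound \(U(A)-1\ge 1/\prod_j T_j\) since \(U(A)\) is a rational with that denominator. This gives an overload at \(\ell\) of size at most \(\prod_j T_j\cdot\sum_j C_j\), whose bit length is polynomial. Lemma~\ref{lem:jump} then moves it to a discontinuity \(\ell'\le\ell\), so \(k\le\ell\). \textbf{Case \(U(A)\le1\).} Use the standard bound \(\dbf(A,\ell)\le U(A)\ell+\sum_j (T_j-D_j)U_j\) together with its analogue for arbitrary deadlines, where \(\max\{0,T_j-D_j\}\) replaces \(T_j-D_j\). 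If \(U(A)<1\), any overload satisfies \(\ell<\sum_j\max\{0,T_j-D_j\}U_j/(1-U(A))\), which is again at most exponential by the same denominator argument. If \(U(A)=1\), the DBF minus \(\ell\) is periodic with period \(\mathrm{lcm}_j T_j\) beyond \(\max_j D_j\), so an overload, if one exists, occurs below \(\max_j D_j+\mathrm{lcm}_j T_j\). Alternatively, take \(k_{\max}\) as the hyperperiod bound \(\max_j D_j+\prod_j T_j\) uniformly across all three subcases. Either way the bit length of \(k_{\max}\) is polynomial.

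The same verifier serves both deadline models, since the constrained case is a restriction. It follows that \(\VTS\in\Sigma_2^P\). Equivalently, an NP machine guesses \(A\), checks \(V(A)\ge K\), and asks an NP oracle whether an overloaded window \((i,k)\) with \(k\le k_{\max}\) exists.
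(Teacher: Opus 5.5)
Your overall architecture is sound: guess the selection vector, let the universal certificate be a DBF discontinuity \(D_i+kT_i\) with \(k\) of polynomial bit length, and invoke Lemma~\ref{lem:jump}. It differs from the paper in how you handle \(U(A)>1\). The paper never looks for an overload window in that case. It compares \(U(A)\) with \(1\) exactly, in polynomial time, and declares infeasibility directly. For \(U(A)\le1\) it uses one shift identity, \(\dbf(A,\ell)=\dbf(A,\ell-P)+P\,U(A)\) for \(\ell\) beyond \(P+D_{\max}(A)\) with \(P\) the hyperperiod. This bounds a violating window for \(U<1\) and \(U=1\) together, in both deadline models. Your predicate \(R\) never tests utilization, so you must also bound the first overload when \(U(A)>1\). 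That is extra work, and it is exactly where your argument has problems. The cheapest repair is to add the polynomial-time check \(U(A)\le1\) to \(R\). Your \(U(A)=1\) periodicity argument then works verbatim for \(U(A)<1\), and the separate \(U<1\) bound is no longer needed.

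There are two concrete errors in the \(U(A)>1\) handling.

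\textbf{The lower bound fails for arbitrary deadlines.} Your inequality \(\dbf(A,\ell)\ge U(A)\ell-\sum_jC_j\) silently uses \(D_j\le T_j\). For the single task \((C,D,T)=(1,100,1)\) at \(\ell=50\), \(\dbf=0\) while the right side is \(49\). The theorem must be proved in the arbitrary-deadline model, since that is the general case. Replace the step with \(\dbf(A,\ell)\ge\sum_jU_j(\ell-D_j)=U(A)\ell-\sum_jU_jD_j\), which follows from \(\lfloor z\rfloor+1>z\). Then \(\ell=\sum_jC_jD_j\prod_{i\ne j}T_i+1\) is an overload of polynomial bit length, and the rest of your case goes through.

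\textbf{The \enquote{alternative} uniform bound is false.} Drop \(k_{\max}=\max_jD_j+\prod_jT_j\): it fails when \(U(A)>1\) with arbitrary deadlines. Take \(A=\{(1,1000,2),(2,1000,3)\}\), so \(U(A)=7/6\). For \(\ell<1000\) the demand is zero. For \(\ell\ge1000\) one has \(\dbf(A,\ell)-\ell=g(\ell-1000)-997\), where \(g(y)=\lfloor y/2\rfloor+2\lfloor y/3\rfloor-y\le\lfloor y/6\rfloor\). So the first overload is at \(\ell=6988\). Every overloaded discontinuity therefore has index \(k\ge2994\) for the first task and \(k\ge1996\) for the second, both exceeding \(1006\). (For constrained deadlines the uniform bound does hold, because \(\dbf(A,P)=P\,U(A)>P\) at the hyperperiod.)
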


\begin{proof}
    Consider a candidate selection vector that specifies a selected indexed submultiset \(A\). Its value \(V(A)\) is computable in polynomial time. The remaining question is whether this fixed task set is feasible. The following argument verifies infeasibility in NP for any sporadic task system, using Theorem~\ref{thm:dbf-criterion}. If \(U(A)>1\), infeasibility is direct. The empty selection is trivially feasible. For nonempty \(A\) with \(U(A)\le 1\), let 
    \[ P(A)=\operatorname{lcm}\{T_i:\tau_i\in A\}, \qquad D_{\max}(A)=\max_{\tau_i\in A}D_i.\]
    For every \(\ell>P(A)+D_{\max}(A)\), increasing the window length by one hyper-period \(P(A)\) adds exactly \(P(A)C_i/T_i\) demand from each task \(\tau_i\in A\). Therefore,
    \[ \dbf(A,\ell)=\dbf(A,\ell-P(A))+P(A)U(A).\]
    Because \(U(A)\le 1\), a violation at \(\ell\) implies a violation at
    \(\ell-P(A)\). Repeating this shift shows that, if \(A\) is infeasible,
    some violating window satisfies \(\ell\le P(A)+D_{\max}(A)\). By
    Lemma~\ref{lem:jump}, it may be chosen at a DBF discontinuity.
    Although \(P(A)\) can be exponentially large, its binary representation
    has at most
    \(1+\sum_{\tau_i\in A}\lceil\log_2 T_i\rceil\) bits. Since
    \(D_{\max}(A)\) is an input parameter, every
    \(\ell\le P(A)+D_{\max}(A)\) also has polynomial binary length. Given this
    binary-encoded \(\ell\), checking \(\dbf(A,\ell)>\ell\) takes polynomial
    time. Therefore infeasibility has an NP verification, and feasibility is
    in coNP. Once a selection \(A\) is chosen, the combined condition that
    \(V(A)\ge K\) and \(A\) is feasible is therefore coNP-decidable. The
    \(\VTS\) decision problem asks whether such a selection exists,
    placing it in \(\mathrm{NP}^{\mathrm{coNP}}=\Sigma_2^P\).
\end{proof}

\section{Reduction Overview}
\label{sec:reduction-overview}

We now introduce the quantified source problem and summarize the construction and its proof obligations.

\subsection{Source Problem and Reduction Idea}

We reduce from the standard \(\Sigma_2^P\)-complete \(\exists\forall\)-3DNF problem \citep{stockmeyer1976ph,wrathall1976complete,schaefer2002compendium}. Its instances have the form \[ \exists X\forall Y\,\varphi(X,Y).\] Here \(X=\{x_1,\ldots,x_n\}\), \(Y=\{y_1,\ldots,y_u\}\), and \(\varphi\) is a three-literal disjunctive-normal-form (3-DNF) formula with exactly three literals over distinct variables in each term. The question is whether some assignment \(\alpha\) to \(X\) makes \(\varphi(\alpha,\beta)\) true for every assignment \(\beta\) to \(Y\). Under the quantifier interpretation, the existential player commits to \(\alpha\) first. That choice must survive every response \(\beta\) of the universal player. Appendix~\ref{app:normal-form} justifies this normal form.

Fix an existential assignment \(\alpha\). To show that \(\alpha\) is not winning, it is enough to find one universal assignment \(\beta\) for which \(\varphi(\alpha, \beta)\) is false. We therefore work with its negation, a three-literal conjunctive-normal-form (3-CNF) formula:
\[ F(X,Y)=\neg\varphi(X,Y)=\mathcal C_1\wedge\cdots\wedge\mathcal C_m. \]
We write \(\mathcal C_k\) for a clause, reserving \(C_i\) for task execution time. Therefore \(F(\alpha,\beta)\) is true exactly when \(\beta\) is a counterexample to \(\alpha\), and \(\alpha\) is winning exactly when no \(\beta\) satisfies \(F(\alpha,\beta)\).

The reduction uses three representations of the counterexample condition: Boolean patterns, modular tests, and task demands. First, we identify the variable-consistency and clause-satisfaction patterns to recognize. Second, we convert these clause-satisfaction patterns into a number-theoretic representation as sets of congruences, which describe recurring sets of integers.  In this number-theoretic construction, truth assignments to boolean variables are represented by integer ``coordinates'' and the set of congruences enforce a consistent assignment of truth values in these clause-satisfaction patterns. Third, using the number-theoretic representation of the Boolean formula \(F\), we construct sporadic tasks whose combined demand encodes the congruences derived in the previous step. For each existential assignment \(\alpha\), the construction provides a corresponding task selection for which a deadline miss is possible exactly when some assignment \(\beta\) satisfies \(F(\alpha,\beta)\). The following paragraphs explain these translations.

We construct the entire task collection from \(F\) before any optional selection is made. The modular tests are imposed on a nonnegative integer demand-window index \(x\). Each coordinate has an assigned prime, and its value at \(x\) is the residue of \(x\) modulo that prime. Different coordinates use different primes. Boolean tests use residues \(0\) and \(p-1\) modulo a coordinate's prime \(p\). A \emph{pattern} specifies the coordinate residues required for one local test. The Chinese Remainder Theorem (CRT) combines these requirements into one congruence \(x\equiv a\pmod b\), where \(b\) is the product of the pattern's primes.

For example, the coordinate requirements \(x\equiv2\pmod3\) and \(x\equiv0\pmod5\) combine into \(x\equiv5\pmod{15}\). This illustrates one pattern; the construction applies the same translation to its clause and consistency tests.

A \emph{full gadget} \(G(a,b,w)\) converts the congruence and a weight \(w\) into \(b\) ordinary sporadic tasks. Their execution times, deadlines, and periods are chosen so that, at designated test windows, their combined demand equals a baseline contribution plus \(w\) when the index matches the congruence, and equals the baseline otherwise. Thus the pattern specifies a test, CRT supplies its arithmetic form, and the gadget supplies the tasks that implement it. All gadgets inspect the same index \(x\), so matching patterns agree on every shared coordinate.

The optional selectors are individual tasks. Choosing one of two selectors per existential variable fixes \(\alpha\). Each clause of \(F\) has a pattern for each satisfying assignment to its three variables, while
additional patterns provide validity and consistency tests. CRT also ensures that one index can realize one prescribed residue for each coordinate simultaneously. Arbitrary indices need not represent Boolean assignments; the validity, consistency, and clause tests together prevent these indices from creating false counterexamples.

The value target forces every task of every full gadget into the selected set. Utilization and demand constraints then enforce exactly one selector per existential variable. A selection with this structure is called \emph{well formed}. For such a selection encoding \(\alpha\), an overloaded window exists exactly when some universal assignment \(\beta\) satisfies \(F(\alpha,\beta)\). This describes how the negated matrix is used inside the reduction; the source problem is still \(\exists X\forall Y\,\varphi(X,Y)\), with its 3-DNF matrix. We assume \(n\ge1\), adding an unused existential variable if necessary, which leaves the answer unchanged.

Figure~\ref{fig:quantifier-mechanisms} gives the complete reduction at the level of roles rather than formulas. Later sections define the boxes in the
middle of the figure; the correctness proof shows that their composition has the behavior stated on the right.

\begin{figure}[ht]
\centering
\includegraphics[width=0.97\linewidth]{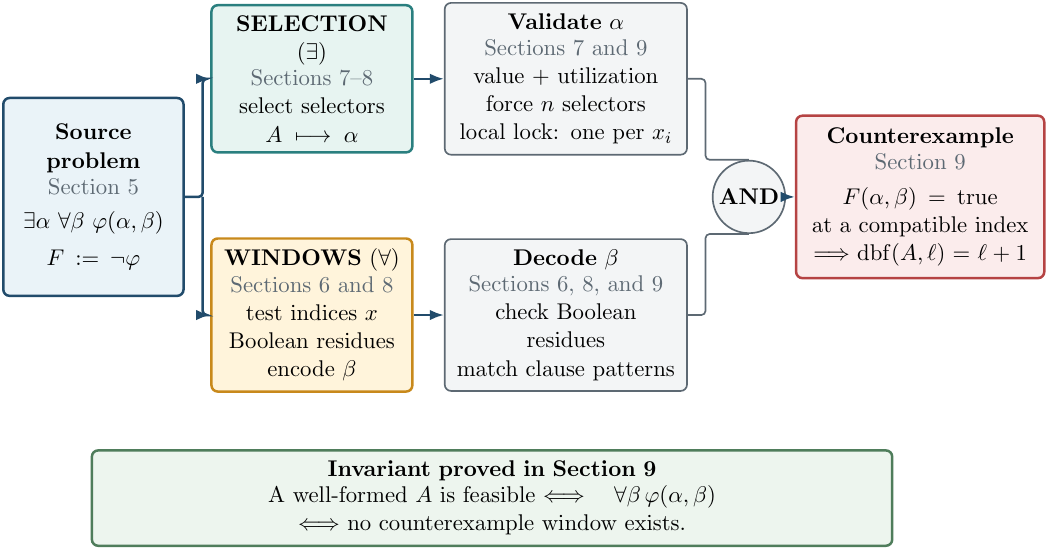}
\caption{Realization of the source quantifiers in the reduction. Optional task selections encode the existential assignment \(\alpha\). CRT ensures that every universal assignment \(\beta\) has a compatible window index. For a well-formed selection, any \(\beta\) satisfying \(F(\alpha,\beta)\) yields an overloaded window; the correctness proof also shows that an overloaded window implies such a counterexample.}
\label{fig:quantifier-mechanisms}
\end{figure}

\subsection{Construction at a Glance}
\label{sec:construction-preview}

Before developing the gadgets, we summarize the complete output instance. The detailed task parameters appear in Sections~\ref{sec:gadgets}--\ref{sec:construction}. A full gadget \(G(a,b,w)\) expands into \(b\) tasks and recognizes one residue pattern. The selector \(H(p,w)\) is a single optional task with a weight-\(w\) demand spike at the designated test windows when \(x\equiv p-1\pmod p\).

Each existential variable \(x_i\) has two coordinates, \(t_i\) and \(f_i\), with distinct primes. Each universal variable \(y_j\) has one coordinate with its own prime. An index is \emph{exact} for \(x_i\) when both of its coordinate residues are Boolean. For these exact tests, write \(1\) for a coordinate's maximum residue \(p-1\) and \(0\) for residue zero. In the order \((t_i,f_i)\), the pairs \(10\) and \(01\) encode true and false; \(11\) and \(00\) are the both-match and neither-match consistency tests. These pairs describe the window index's residues, not which tasks were selected.

The inventory uses three demand parameters: \(W_0\) is the anchor weight, \(q\) is the selector weight, and \(S\) is the local spike budget for one existential variable. These parameters control execution demand;
task values, which determine whether the selection reaches its target, are assigned separately below.

The output has the following five task families:
\begin{enumerate}
    \item To supply a common demand offset, one always-matching anchor \(G(0,1,W_0)\).

    \item To test consistency of the existential choices, four full helpers per \(x_i\), for exact states \(10,01,11,00\), with weights \(1,1,1,S\), respectively.

    \item To recognize Boolean universal values, two unit-weight full gadgets per \(y_j\), for residues \(0\) and \(p_{y_j}-1\).

    \item To recognize clause satisfaction, seven unit-weight full gadgets per clause \(\mathcal C_k\), one for each satisfying assignment to its three distinct variables.

    \item To choose the existential assignment, two optional selectors per \(x_i\): \(\tau_i^t=H(p_{t_i},q)\) and \(\tau_i^f=H(p_{f_i},q)\).
\end{enumerate}
Let \(\mathcal P\) be the weighted multiset of patterns in families 2--4. Each \(\rho\in\mathcal P\)  specifies a congruence \(a_\rho\pmod{b_\rho}\) and a weight \(w_\rho\). The detailed residue requirements appear in Section~\ref{sec:construction}. Let \(p_{\max}\) be the largest coordinate prime. The constants have the following roles:
\begin{itemize}
    \item The selector weight \(q\) and local spike budget \(S\) are \[ q=m+1,\qquad S=q+1. \] At most one satisfying pattern matches per clause, so the entire clause layer contributes at most \(m\) units of extra demand above its baseline. Choosing \(q>m\) makes a duplicate selector's penalty larger than that possible contribution. A matching selector together with a matching unit helper fills the budget \(S\).

    \item The total non-anchor mandatory weight and the anchor weight are \[ B_{\mathrm{mand}}=n(S+3)+2u+7m,\qquad W_0=1+qp_{\max}+S. \] Here \(B_{\mathrm{mand}}\) sums gadget weights, not individual tasks. The anchor supplies a fixed demand offset and helps ensure that the resulting tasks satisfy the constrained-deadline requirements.

    \item The common timing scale is \[ \Lambda=W_0+B_{\mathrm{mand}}+nq. \] This choice makes the utilization of all mandatory tasks and exactly \(n\) selected selectors equal to one.

    \item The forbidden extra-demand threshold is \[ h=W_0+nS+u+m. \] The terms correspond to the anchor, the \(n\) local budgets, the \(u\) universal-variable tests, and the \(m\) clause tests. Section~\ref{sec:gadgets} explains how the test-window lengths make this threshold exceed the capacity remaining beyond the baseline budget.
\end{itemize}
Using indexed multiset union, the complete task collection is
\[ \begin{aligned} \Gamma_{\mathrm{mand}} &=G(0,1,W_0)\uplus\Bigl(\biguplus_{\rho\in\mathcal P} G(a_\rho,b_\rho,w_\rho)\Bigr),\\
\Gamma_{\mathrm{opt}} &=\biguplus_{i=1}^{n}\{\tau_i^t,\tau_i^f\},\\ \Gamma&=\Gamma_{\mathrm{mand}}\uplus\Gamma_{\mathrm{opt}}. \end{aligned} \]
Let \(M=|\Gamma_{\mathrm{mand}}|\). Each task in \(\Gamma_{\mathrm{mand}}\) receives value \(V_0=2n+1\), each optional selector receives value \(1\), and the target is \(K=MV_0+n\). These values force every task in \(\Gamma_{\mathrm{mand}}\) into any selection meeting the target; the model itself has no mandatory-task constraint.

We call a selection \emph{well formed} if it contains every task in \(\Gamma_{\mathrm{mand}}\) and exactly one of \(\tau_i^t,\tau_i^f\) for each \(x_i\). Such a selection has value \(K\) and utilization one. The proof must establish that every feasible selection meeting the value target is well formed and that a well-formed selection is feasible exactly when its encoded existential assignment is winning.

\subsection{Reduction Roadmap}

The demand gadgets of Section~\ref{sec:gadgets} and the local selector tests of Section~\ref{sec:selector-lock} support four proof obligations:

\begin{enumerate}
    \item \emph{Force the required tasks and selector count.}
    The values chosen in Section~\ref{sec:construction} force every mandatory task and at least \(n\) selectors into a selection meeting the target. The utilization bound permits at most \(n\). Section~\ref{sec:correctness} therefore obtains exactly \(n\) selected selectors in total.

    \item \emph{Distribute the selectors one per variable.}
    The local tests of Section~\ref{sec:selector-lock} expose a duplicated choice without allowing a missing choice elsewhere to offset its extra demand. Section~\ref{sec:correctness} uses CRT to realize these tests at one common index and obtain an overloaded window. Therefore every feasible selection meeting the target is well formed and defines one assignment \(\alpha\).

    \item \emph{Relate feasibility to the source formula.}
    For a well-formed selection, Section~\ref{sec:correctness} proves that a winning \(\alpha\) satisfies every demand constraint. Conversely, any universal response \(\beta\) satisfying \(F(\alpha,\beta)\) has a compatible index producing an overloaded window.

    \item \emph{Verify the reduction's size and task parameters.}
    Section~\ref{sec:boundedness} proves that the construction takes polynomial time, produces polynomially many tasks with polynomially bounded numerical parameters, and satisfies \(C_i\le D_i\le T_i\) for every task.
\end{enumerate}

The logical correspondence is summarized by the invariant:
\[
\begin{aligned}
&\text{a well-formed selection encoding }\alpha
   \text{ is feasible}\\
&\qquad\Longleftrightarrow\quad
   \text{no assignment }\beta\text{ satisfies }F(\alpha,\beta)\\
&\qquad\Longleftrightarrow\quad
   \forall\beta\;\varphi(\alpha,\beta).
\end{aligned}
\]

\begin{definition}[Local residue patterns and selector locks]
\label{def:local-pattern}
The support of a residue pattern is the set of CRT coordinates whose residues it specifies. A pattern is \emph{local} if its support has size bounded by a constant independent of the source-instance size. The selector lock for \(x_i\) is local to that variable: whether any of its helpers or selectors matches an index \(x\) depends only on the residues at \(t_i,f_i\).
\end{definition}
In this construction, an \(X\)-helper tests two coordinates, a \(Y\)-validity pattern tests one, and a clause pattern tests at most six. Each gadget therefore uses a product of only constantly many coordinate primes, which is the key to the size bound in Section~\ref{sec:boundedness}. All patterns are tested at the same index \(x\); the feasibility argument combines their contributions using the common timing, utilization, and value constraints.

Table~\ref{tab:reduction-notation} collects the notation used throughout the reduction.

\begin{table}[htbp]
\caption{Symbols and notation used in the reduction.}
\label{tab:reduction-notation}
\centering
\small
\renewcommand{\arraystretch}{1.12}
\begin{tabular}{@{}ll@{}}
\toprule
Symbol & Meaning \\
\midrule
\(X,Y\) & existential and universal source variables \\
\(\alpha,\beta\) & assignments to \(X\) and \(Y\) \\
\(\varphi\) & source 3-DNF formula \\
\(F=\neg\varphi=\mathcal C_1\wedge\cdots\wedge \mathcal C_m\) &
counterexample 3-CNF formula \\
\(n,u,m\) & numbers of \(X\)-variables, \(Y\)-variables, and clauses \\
\(A,V(A),K\) & selected task set, its value, and the target \\
\(z,p_z\) & CRT coordinate and its assigned prime \\
\(x\) & integer demand-window index \\
\(\rho,(a_\rho,b_\rho)\) & residue pattern and its compiled congruence \\
\(G(a,b,w)\) & full congruence gadget of weight \(w\) \\
\(H(p,w)\) & single-task max-residue selector gadget \\
\(\Lambda,h\) & block length and early-window threshold \\
\(t_i,f_i\) & true and false CRT coordinates for \(x_i\) \\
\(\tau_i^t,\tau_i^f\) & optional true and false selector tasks \\
\(q,S\) & selector weight and local spike budget \\
\(W_0,V_0\) & anchor weight and forced-task value \\
\bottomrule
\end{tabular}
\end{table}

\section{CRT Demand Gadgets}
\label{sec:gadgets}

The reduction needs a scheduling object that recognizes a conjunction of modular conditions at one time. A coordinate prime stores one logical digit, a residue pattern combines several digits, and the CRT compiles that pattern into one congruence class. A full gadget \(G\) realizes the congruence as exact baseline demand plus a residue-triggered spike. The single-task gadget \(H\) provides the corresponding compact interface for an optional selector: when its max residue matches, its spike is exact; when the residue does not match, its demand is bounded above by the baseline term. This section defines those interfaces, explains the fixed earlier-deadline construction, proves their demand behavior, and identifies the only DBF test-length forms needed later.

\noindent\textbf{Notation.}
A plain \(x\) denotes the integer demand-window index, also called a \emph{CRT timestamp} below. It specifies coordinate residues; the actual
interval length is denoted by \(\ell\). The symbol \(r\) indexes a task position inside a full gadget. Formula variables always carry subscripts, as in \(x_i\) and \(y_j\).

\subsection{Prime Coordinates and Residues}

This subsection defines the CRT representation used to encode universal assignments and to expose local consistency states. Each logical coordinate \(z\) receives a distinct prime \(p_z\). The two residues used for Boolean encodings are
\[
    0\pmod{p_z}
    \qquad\text{and}\qquad
    p_z-1\pmod{p_z}.
\]
An arbitrary timestamp may have another residue; such a timestamp is non-Boolean for that coordinate.
Define the total indicator
\[
\chi_z(x)=
\begin{cases}
1, & x\equiv p_z-1\pmod{p_z},\\
0, & \text{otherwise}.
\end{cases}
\]
Therefore \(\chi_z(x)=1\) says exactly that the max residue \(p_z-1\) is present. The value \(\chi_z(x)=0\) does not by itself encode false: the residue may be \(0\), but it may also be non-Boolean. Exact Boolean encodings will therefore be identified separately.

A finite pattern \(\rho\) specifies exact residues for a set of coordinates \(Z_\rho\). Let
\[
    b_\rho=\prod_{z\in Z_\rho}p_z.
\]
Because the coordinate primes are pairwise coprime, the CRT gives a unique residue \(a_\rho\pmod{b_\rho}\) satisfying all requirements in \(\rho\).
Define
\[
\chi_\rho(x)=
\begin{cases}
1, & x\equiv a_\rho\pmod{b_\rho},\\
0, & \text{otherwise}.
\end{cases}
\]
We say that a timestamp \(x\) \emph{matches} the pattern \(\rho\) if it satisfies every residue requirement in \(\rho\), equivalently if \(\chi_\rho(x)=1\). Matching the coordinate's max residue means \(x\equiv p_z-1\pmod{p_z}\). Failing to match this max residue does not imply that the residue is zero.
For later use, define two further total indicators:
\[
\chi_{a,b}(x)=
\begin{cases}
1, & x\equiv a\pmod b,\\
0, & \text{otherwise},
\end{cases}
\qquad
\chi_p(x)=
\begin{cases}
1, & x\equiv p-1\pmod p,\\
0, & \text{otherwise}.
\end{cases}
\]

\subsection[Full Gadget]{Full Gadget \(G(a,b,w)\)}

Fix integers \(\Lambda,h,w\) with \(2\le h\le\Lambda\) and \(w\ge1\). The concrete values are chosen in Section~\ref{sec:construction}. The common timing scale \(\Lambda\) divides the test lengths into blocks. For each nonnegative integer index \(x\), call \(x\Lambda\) the \emph{baseline test length} and \(x\Lambda+h-1\) the \emph{early test length}. Thus \(h-1\) is the offset of the early test after the preceding block boundary. 
For \(0\le a<b\), the full gadget \(G(a,b,w)\) contains one task for each \(r\in\{1,\ldots,b\}\), with timing parameters
\[ \begin{aligned} C_r&=w,\\ D_r&=\begin{cases} a\Lambda+h-1, & r=a+1,\\ r\Lambda, & r\ne a+1, \end{cases}\\ T_r&=b\Lambda. \end{aligned} \]
Task values are assigned in Section~\ref{sec:construction}. The component with \(r=a+1\) is the \emph{special task}. For comparison, consider a \emph{reference specification} with the same execution times and periods and deadline \(r\Lambda\) for every component. The constructed specification instead gives the special task the fixed relative deadline \(a\Lambda+h-1\), earlier than its reference deadline \((a+1)\Lambda\) by
\[ \delta=(a+1)\Lambda-(a\Lambda+h-1)=\Lambda-h+1. \]
Here \(h-1\) is the offset after the preceding boundary \(a\Lambda\), whereas \(\delta\) is the difference between the two deadlines. This weighted and rescaled form of the congruence-class construction of \citet{ekberg2015constrained} produces one extra weight-\(w\) contribution at early windows whose index satisfies \(x\equiv a\pmod b\). The parameter verification later establishes \(C_r\le D_r\le T_r\) for all instantiated tasks.

\subsection[Max-Residue Gadget]{Max-Residue Gadget \(H(p,w)\)}

An optional selector must be one task, because the target problem permits selection of individual tasks rather than indivisible groups. For a prime \(p\), define the single-task gadget \(H(p,w)\) by its timing tuple \[ (C,D,T)=\bigl(wp,(p-1)\Lambda+h-1,p\Lambda\bigr). \]
Its value is assigned in Section~\ref{sec:construction}. We say that \(x\) matches \(H(p,w)\) when \(x\equiv p-1\pmod p\). At that residue its demand at the early window is exactly \(wx+w\); otherwise its demand is at most \(wx\), as proved below.

\subsection{Gadget Timeline Intuition}

We compare the reference and constructed specifications to explain the extra demand at an early test length. The DBF is evaluated on interval lengths; the timelines below show the synchronous minimum-separation release pattern that realizes each specification's demand count.

Both specifications have \(b\) tasks with execution time \(w\) and period \(b\Lambda\). The reference deadlines are \(\Lambda,2\Lambda,\ldots,b\Lambda\), so the repeated deadline events contribute one weight-\(w\) job at every positive multiple of \(\Lambda\). The constructed specification has the same deadlines except that component \(a+1\) has relative deadline \(a\Lambda+h-1\). Each relative deadline is fixed before execution and applies to every job of its task.

For example, take \(b=5\), \(a=2\), \(\Lambda=10\), \(h=4\), and \(w=1\). The reference deadlines in the first period are \(10,20,30,40,50\); the constructed deadlines are \(10,20,23,40,50\). At length \(20\), both specifications demand two units. At length \(23\), the reference specification still demands two, but the constructed specification demands three. The early test extends the baseline by \(h-1=3\), while the special task's constructed deadline is earlier than its reference deadline by \(\Lambda-h+1=7\).

In general, the special task has a deadline event in \((x\Lambda,x\Lambda+h-1]\) exactly when \(x\equiv a\pmod b\). No non-special deadline lies in this interval because \(0<h-1<\Lambda\). Figure~\ref{fig:full-gadget-timeline} illustrates the resulting extra demand in the matching case. The one-task gadget \(H(p,w)\) uses the same timing principle to detect its max residue, but its nonmatching baseline is only an upper bound.

\begin{figure}[ht]
\centering
\includegraphics[width=0.95\linewidth]{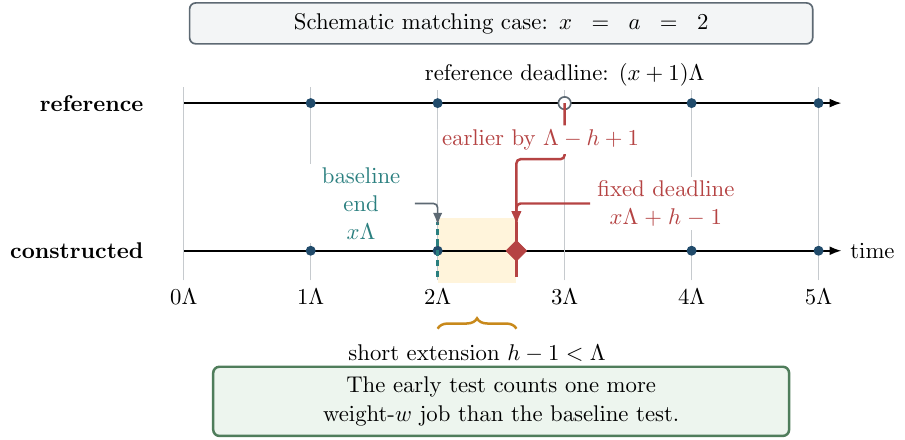}
\caption{Schematic comparison of reference and constructed deadline events when \(x\equiv a\pmod b\). The special job's deadline is \((x+1)\Lambda\) under the reference specification and \(x\Lambda+h-1\) under the constructed specification, earlier by \(\Lambda-h+1\). Both specifications count the same jobs at the baseline length \(x\Lambda\); at the early length \(x\Lambda+h-1\), the constructed specification counts one additional weight-\(w\) job.}
\label{fig:full-gadget-timeline}
\end{figure}

\subsection{Gadget Demand Behavior}

\begin{lemma}
\label{lem:gadget-demand}
Both \(G(a,b,w)\) and \(H(p,w)\) have utilization \(w/\Lambda\). For every integer \(x\ge0\),
\[ \dbf(G(a,b,w),x\Lambda)=wx \]
and
\[ \dbf(G(a,b,w),x\Lambda+h-1) =wx+w\chi_{a,b}(x). \]
Additionally,
\[ \dbf(H(p,w),x\Lambda)\le wx \]
and
\[ \dbf(H(p,w),x\Lambda+h-1) \le wx+w\chi_p(x). \]
The final inequality is an equality whenever \(x\equiv p-1\pmod p\).
\end{lemma}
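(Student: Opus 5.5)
The proof is a direct computation from the DBF formula, one task at a time. Utilization is immediate. \(G(a,b,w)\) has \(b\) tasks with \(C_r/T_r=w/(b\Lambda)\), so its utilization is \(w/\Lambda\). The task \(H(p,w)\) has \(C/T=wp/(p\Lambda)=w/\Lambda\).

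For \(G\) at the baseline length \(\ell=x\Lambda\), I first treat a non-special task with \(D_r=r\Lambda\), \(1\le r\le b\). Its term is \(w\cdot\max\{0,\lfloor (x-r)/b\rfloor+1\}\), which counts the integers \(k\in\{1,\dots,x\}\) with \(k\equiv r\pmod b\). For the special task, \(D=a\Lambda+h-1\), and I use \(1\le h-1<\Lambda\) to show that its floor term equals the reference one with deadline \((a+1)\Lambda\). Write \(x\Lambda-a\Lambda-h+1=(x-a-1)\Lambda+(\Lambda-h+1)\) with \(0<\Lambda-h+1\le\Lambda-1\). Dividing by \(b\Lambda\) then gives the same floor as \((x-a-1)\Lambda/(b\Lambda)\). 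To make this rigorous I will use the elementary fact \(\lfloor (c\Lambda+e)/(b\Lambda)\rfloor=\lfloor c/b\rfloor\) for integer \(c\) and \(0\le e<\Lambda\). The max with \(0\) also behaves correctly: the expression is negative exactly when \(x-a-1<0\).

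Summing over \(r=1,\dots,b\) then counts every \(k\in\{1,\dots,x\}\) exactly once, since the residues \(r\bmod b\) cover all classes. This gives \(wx\). At the early length \(x\Lambda+h-1\), each non-special task has numerator \((x-r)\Lambda+(h-1)\) with \(0<h-1<\Lambda\), so by the same floor fact its count is unchanged. The special task has numerator \((x-a)\Lambda\), giving count \(\max\{0,\lfloor(x-a)/b\rfloor+1\}\). This is the reference count \(\max\{0,\lfloor(x-a-1)/b\rfloor+1\}\) plus one exactly when \(b\mid x-a\) and \(x\ge a\). Since \(0\le a<b\) and \(x\ge0\), that condition is exactly \(x\equiv a\pmod b\), which yields \(wx+w\chi_{a,b}(x)\).

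For \(H(p,w)\), the demand at a length \(\ell\) is \(wp\cdot\max\{0,\lfloor(\ell-(p-1)\Lambda-h+1)/(p\Lambda)\rfloor+1\}\). At the early length, the numerator is \((x-p+1)\Lambda\), so the count is \(N=\max\{0,\lfloor (x+1)/p\rfloor\}=\lfloor (x+1)/p\rfloor\). This gives \(wpN\le w(x+1)\), with equality iff \(p\mid x+1\), i.e. \(\chi_p(x)=1\). If \(p\nmid x+1\), then \(pN\le x\), so the demand is at most \(wx\). In both cases the demand is at most \(wx+w\chi_p(x)\), with equality in the matching case. At the baseline length, the numerator is \((x-p)\Lambda+(\Lambda-h+1)\) with remainder in \([1,\Lambda-1]\), so the count is \(\lfloor x/p\rfloor\) and the demand is \(wp\lfloor x/p\rfloor\le wx\).

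The only delicate point is the floor manipulation with the offsets \(h-1\) and \(\Lambda-h+1\), together with the boundary behavior of the \(\max\{0,\cdot\}\) when \(x<a\) or \(x<p-1\). These are handled by isolating the floor fact above, using \(2\le h\le\Lambda\) to keep the offsets in \([0,\Lambda)\), and checking the small-\(x\) cases directly against the counting interpretation.
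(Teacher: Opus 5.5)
Your proposal is correct and takes essentially the same route as the paper: you compare each job count with the reference specification, use \(0<h-1<\Lambda\) to show that only the special job of \(G\) changes between the baseline and early lengths, and compute the selector's counts as \(\lfloor (x+1)/p\rfloor\) and \(\lfloor x/p\rfloor\), exactly as the paper does. Your explicit identity \(\lfloor (c\Lambda+e)/(b\Lambda)\rfloor=\lfloor c/b\rfloor\) for integer \(c\) and \(0\le e<\Lambda\), together with your check of the \(\max\{0,\cdot\}\) boundary cases, simply makes rigorous the counting steps that the paper states in words.
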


\begin{proof}
    The full gadget contains \(b\) tasks, each with utilization \(w/(b\Lambda)\), while the max-residue gadget has utilization \(wp/(p\Lambda)\). Therefore each gadget has total utilization \(w/\Lambda\).

    For \(G(a,b,w)\), all \(b\) tasks have period \(b\Lambda\). The reference specification has deadline \(r\Lambda\) for component \(r\), while the constructed specification gives the special component \(a+1\) the fixed earlier deadline \(a\Lambda+h-1\). Because \(0<h-1<\Lambda\), a special job is counted at a baseline length \(x\Lambda\) under the constructed specification exactly when it is counted there under the reference specification. The other deadlines coincide in the two specifications. The reference specification counts exactly one weight-\(w\) job for each integer position \(1,\ldots,x\), so \(\dbf(G(a,b,w),x\Lambda)=wx\).

    Extending the baseline length to \(x\Lambda+h-1\) counts no additional job of a non-special task because the extension is shorter than \(\Lambda\). The special job with release index \(k\ge0\) is counted at the early length exactly when \[ a\Lambda+h-1+kb\Lambda \le x\Lambda+h-1, \] or equivalently \(a+kb\le x\). Compared with the baseline condition \(a+1+kb\le x\), the early test therefore counts one additional job exactly when \(x\equiv a\pmod b\).

    For \(H(p,w)\), the synchronous minimum-separation release pattern has deadline events \((p-1)\Lambda+h-1+kp\Lambda\) for integers \(k\ge 0\). At the early window \(x\Lambda+h-1\), the job with index \(k\) is counted exactly when \(p-1+kp\le x\), so the count is \(\lfloor (x+1)/p\rfloor\). If \(x\equiv p-1\pmod p\), then \(p\) divides \(x+1\) and the demand is \[ wp\cdot\frac{x+1}{p}=w(x+1)=wx+w. \] Otherwise, writing \(s=x\bmod p<p-1\), the count is \((x-s)/p\), so the demand is \(w(x-s)\le wx\). At the baseline window \(x\Lambda\), job \(k\) is counted exactly when \(p-1+kp\le x-1\), because \(0<h-1<\Lambda\); the count is therefore \(\lfloor x/p\rfloor\), and the demand is \(wp\lfloor x/p\rfloor\le wx\).
\end{proof}

\noindent\textbf{Spike accounting.}
A full gadget \(G(a,b,w)\) matches \(x\) when \(\chi_{a,b}(x)=1\). Its demand beyond its exact baseline \(wx\) is \(w\chi_{a,b}(x)\). For \(H(p,w)\), the quantity \(w\chi_p(x)\) is a \emph{spike allowance}: when \(x\) matches its max residue, demand is exactly \(wx+w\); otherwise demand may be smaller than \(wx\), so zero is an upper bound on the excess rather than an exact contribution.

Upper-bound arguments may sum these allowances. Lower-bound arguments use the CRT to choose a timestamp at which every selected selector matches, making its bound exact. If the selected gadgets have total weight \(\Lambda\), then
\[ \dbf(A,x\Lambda+h-1) \le x\Lambda+\text{(sum of matching spike allowances)}. \]
Only \(h-1\) units fit beyond \(x\Lambda\). At a timestamp where all required bounds are attained, an aggregate spike of \(h\) produces demand \(\ell+1\) at \(\ell=x\Lambda+h-1\), violating feasibility.

\subsection{Relevant DBF Test Lengths}

\begin{lemma}[Relevant DBF test lengths]
\label{lem:jump-forms}
For every task with timing parameters \((C,D,T)\) generated by \(G(a,b,w)\) or \(H(p,w)\), every length \(D+kT\), for an integer \(k\ge0\), has the form \(x\Lambda\) or \(x\Lambda+h-1\) for a nonnegative integer \(x\).
\end{lemma}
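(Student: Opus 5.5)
The plan is a direct case analysis over the three kinds of generated timing tuples: non-special tasks of a full gadget, the special task of a full gadget, and the single task of a max-residue gadget. In each case we write \(D+kT\) explicitly and read off its form. Only the definitions of \(G(a,b,w)\) and \(H(p,w)\) are used, together with the fact that all periods are integer multiples of \(\Lambda\). No DBF computation is needed.

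First consider a non-special component \(r\ne a+1\) of \(G(a,b,w)\). Here \(D=r\Lambda\) and \(T=b\Lambda\), so
\[
D+kT=(r+kb)\Lambda .
\]
This has the form \(x\Lambda\) with \(x=r+kb\ge1\), a nonnegative integer.

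Next consider the special component \(r=a+1\). Here \(D=a\Lambda+h-1\) and \(T=b\Lambda\), which gives
\[
D+kT=(a+kb)\Lambda+h-1 .
\]
This is an early test length with \(x=a+kb\). It is nonnegative because the gadget is defined only for \(0\le a<b\) and \(k\ge0\).

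Finally consider \(H(p,w)\). Here \(D=(p-1)\Lambda+h-1\) and \(T=p\Lambda\), which gives
\[
D+kT=\bigl(p-1+kp\bigr)\Lambda+h-1 .
\]
This again has the early form, with \(x=(k+1)p-1\ge p-1\ge1\) since \(p\) is prime.

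There is no real obstacle here. The only point requiring care is that \(x\) is a nonnegative integer in every case, which rests on the standing hypotheses \(0\le a<b\), \(k\ge0\), \(p\ge2\), and on \(\Lambda\) and \(h\) being integers. It is worth noting why the lemma matters. Combined with Lemma~\ref{lem:jump}, it shows that any violated DBF constraint of a set built from these gadgets is witnessed at a baseline length \(x\Lambda\) or an early length \(x\Lambda+h-1\). Those are exactly the lengths at which Lemma~\ref{lem:gadget-demand} gives demand formulas, and the later correctness argument relies on this.
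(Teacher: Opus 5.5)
Your proof is correct and follows the paper's own route: split into the non-special and special components of \(G(a,b,w)\) and the single task of \(H(p,w)\), then note that adding \(kT\), a multiple of \(\Lambda\), preserves the form. You go slightly further than the paper by writing out each index \(x\) explicitly and checking that it is nonnegative, which the paper leaves implicit.
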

\begin{proof}
For a full gadget, \(D\) is either \(r\Lambda\) or \(a\Lambda+h-1\), and \(T=b\Lambda\). For a max-residue gadget, \(D=(p-1)\Lambda+h-1\) and \(T=p\Lambda\). Adding \(kT\) preserves the claimed form in both cases.
\end{proof}
Together with Lemma~\ref{lem:jump}, this proves that the two displayed families of lengths contain every DBF test needed for feasibility.

The gadgets now provide the temporal part of the reduction: a timestamp determines which residue patterns match, and the matching patterns determine the early-window spikes. What the gadgets do not yet guarantee is that a selected set represents one valid existential assignment. The next section provides that local consistency condition.

\section{Local Selector Lock with Polynomial Weights}
\label{sec:selector-lock}

The construction must force every feasible selection \(A\) meeting the value target to choose exactly one truth selector for each existential variable. Recall that meeting the target means \(V(A)\ge K\); feasibility is a separate requirement. Value and utilization first force exactly \(n\) optional selectors in total, but one variable could still receive both selectors while another receives neither.

The selector lock is local to \(x_i\) in the sense of Definition~\ref{def:local-pattern}: its helpers and selectors test only the two coordinates \(t_i,f_i\). At an exact both-match state, a duplicated choice produces excess demand. At an exact neither-match state, the helpers for a missing choice fill its local budget and provide no compensating slack. CRT realizes these per-variable states at one common index. This section proves the per-variable demand inequalities; Lemma~\ref{lem:lock} combines them after Section~\ref{sec:construction} fixes the global timing and value parameters.

\subsection{Selection Choices and Timestamp Tests}

For each existential variable \(x_i\), create two CRT coordinates \[ t_i,\qquad f_i, \] with distinct primes \(p_{t_i}\) and \(p_{f_i}\). These coordinates belong to the timestamp encoding; they are not tasks and are not selection choices.

The selection choices are two optional selector tasks. For a common weight \(q\), define
\[
    \tau_i^t=H(p_{t_i},q),
    \qquad
    \tau_i^f=H(p_{f_i},q).
\]
Selecting \(\tau_i^t\) encodes \(x_i=\mathrm{true}\), while selecting \(\tau_i^f\) encodes \(x_i=\mathrm{false}\). The symbols \(\tau_i^t,\tau_i^f\) denote tasks; \(t_i,f_i\) denote the CRT coordinates inspected by those tasks.

The timestamp tests those claims. At an early window indexed by \(x\), \(\tau_i^t\) has an exact spike of \(q\) when \(x\equiv p_{t_i}-1\pmod{p_{t_i}}\), and \(\tau_i^f\) has an exact spike of \(q\) when \(x\equiv p_{f_i}-1\pmod{p_{f_i}}\). A selector that is not selected contributes no demand, regardless of the timestamp residue.

\subsection{Exact Local Timestamp States}

Define
\[
    \sigma_i(x)=
    \bigl(\chi_{t_i}(x),\chi_{f_i}(x)\bigr).
\]
The four indicator states are
\[
    (1,0),\qquad(0,1),\qquad(1,1),\qquad(0,0).
\]
These are timestamp states, not selection cases. We call \(x\) \emph{exact for \(x_i\)} when both residues are Boolean:
\[
x\bmod p_{t_i}\in\{0,p_{t_i}-1\},
\qquad
x\bmod p_{f_i}\in\{0,p_{f_i}-1\}.
\]
At an exact timestamp, \((1,0)\) and \((0,1)\) encode true and false, while \((1,1)\) and \((0,0)\) are the both-match and neither-match test states. These states describe the timestamp and are not four selection choices. In particular, \((0,0)\) is not an optional \enquote{choose neither} gadget. The mandatory \((0,0)\)-helper introduced below supplies the full local spike budget \(S\) at an exact neither-match timestamp. Lemmas~\ref{lem:missing} and~\ref{lem:lock} use this contribution to ensure that a variable with no selected selector creates no slack that could offset the excess demand produced by a variable with both selectors. At a non-exact timestamp, an indicator value of zero does not necessarily mean that the residue is zero. This is why the exactness condition is stated separately.

\subsection{Selector Weight and Local Budget}

Recall that \(m\) is the number of clauses in \(F=\mathcal C_1\wedge\cdots\wedge \mathcal C_m\). The entire clause layer can change the early-window spike by at most \(m\). Set \[ q=m+1, \qquad S=q+1=m+2. \]
Therefore one selected selector of weight \(q\), together with a unit helper, exactly fills the intended local budget \(S\). A duplicate selector creates an excess of \(q=m+1\), which cannot be hidden by losing at most \(m\) units from the clause layer.

\subsection{Four Exact-State Helper Gadgets}

Let \[ b_i=p_{t_i}p_{f_i}. \] For \(s\in\{10,01,11,00\}\), let \(a_i^s\pmod{b_i}\) be the CRT residue that realizes the corresponding exact pair of Boolean residues. Add the four full helper gadgets
\[
\begin{array}{ccl}
G(a_i^{10},b_i,1), &\qquad& G(a_i^{01},b_i,1),\\
G(a_i^{11},b_i,1), && G(a_i^{00},b_i,S).
\end{array}
\]
Each expression denotes a full gadget and therefore expands into \(b_i\) ordinary sporadic tasks. These helper tasks are not optional selectors. Section~\ref{sec:construction} assigns every component task the high value \(V_0\), and Lemma~\ref{lem:mandatory} proves that every selection meeting the value target must include all four helpers. Therefore \enquote{forced helper} describes a property proved later, not an assumption made here.

At an exact timestamp, exactly one of the four helper patterns matches. The matching helper contributes a spike of \(1\) in states \((1,0)\), \((0,1)\), and \((1,1)\), and a spike of \(S\) in state \((0,0)\). At a non-exact timestamp, none of the four exact-state helpers matches.

The intended lock is summarized in Figure~\ref{fig:selector-lock}. A valid one-selector selection stays within the local budget \(S\) at every timestamp. If both selectors are selected, the exact state \((1,1)\) activates both selector spikes and produces a local spike of \(S+q\), exceeding the budget by \(q\). If neither selector is selected, the exact state \((0,0)\) still contributes the full \(S\), so the missing variable creates no slack that could absorb the duplicate's excess. Therefore \((1,0)\), \((0,1)\), \((1,1)\), and \((0,0)\) are timestamp tests for valid, duplicate, and missing selections, and not four selection tests.

\begin{figure}[!ht]
\centering
\includegraphics[width=0.94\linewidth]{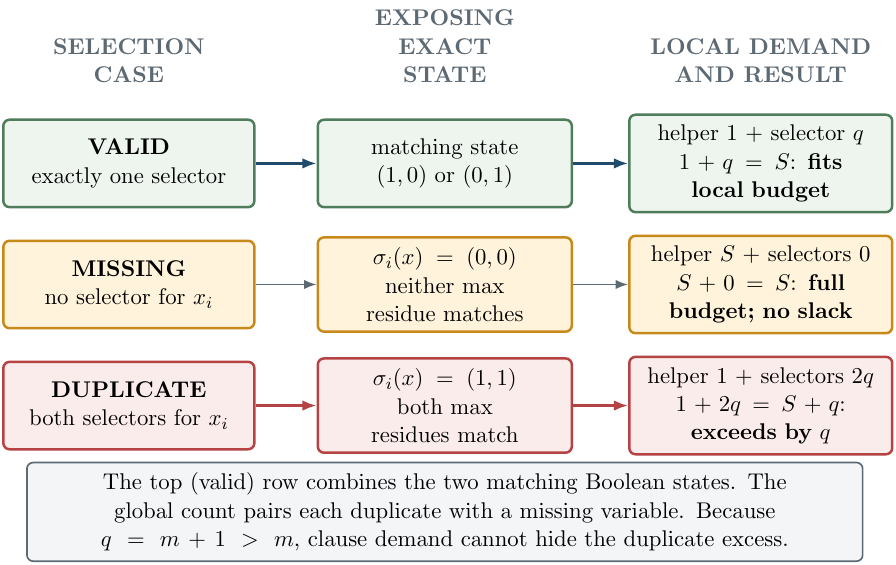}
\caption{The local selector lock at exposing exact timestamp states. A valid one-selector choice stays within its local budget \(S=q+1\). A missing choice attains the full budget \(S\) at an exact \((0,0)\) timestamp, leaving no compensating slack there. When the total selector count is \(n\), every duplicated choice is paired with a missing choice elsewhere. At an exact \((1,1)\) timestamp, a duplicate contributes \(S+q\). The CRT realizes all these local states simultaneously, and \(q=m+1\) exceeds the entire clause layer's possible variation.}
\label{fig:selector-lock}
\end{figure}
\FloatBarrier

\subsection{Local Spike Allowance and Budget Margin}

For a selection \(A\) that contains all four exact-state helper gadgets for \(x_i\), let \(\mathcal L_i(A)\) denote those helpers together with whichever of \(\tau_i^t,\tau_i^f\) are selected. Define
\[
    s_i(A)=
    \mathbf 1[\tau_i^t\in A]
    +\mathbf 1[\tau_i^f\in A].
\]
The local early-window spike allowance is
\[
\begin{aligned}
L_i(A,x)={}&
    \chi_{a_i^{10},b_i}(x)
    +\chi_{a_i^{01},b_i}(x)
    +\chi_{a_i^{11},b_i}(x)
    +S\chi_{a_i^{00},b_i}(x)\\
    &+q\,\mathbf 1[\tau_i^t\in A]\chi_{t_i}(x)
    +q\,\mathbf 1[\tau_i^f\in A]\chi_{f_i}(x).
\end{aligned}
\]
The four helpers have total baseline weight \(S+3\), and the selected selectors have total baseline weight \(q\,s_i(A)\). Additivity of demand and Lemma~\ref{lem:gadget-demand} therefore give
\[
    \dbf\bigl(\mathcal L_i(A),x\Lambda+h-1\bigr)
    \le
    x\bigl(S+3+q\,s_i(A)\bigr)+L_i(A,x).
\]
Every helper term in \(L_i(A,x)\) is an exact spike contribution. A selector term is an upper allowance in general and is exact when that
selected selector matches its max residue. Therefore the inequality is an equality at each timestamp used below to establish a lower bound.

Define the signed local budget margin \[ \mu_i(A,x)=S-L_i(A,x). \] A positive value is unused spike allowance, zero means that the allowance exactly fills the local budget, and a negative value is an overage. The next three lemmas bound this margin for valid selections at every timestamp and compute it at the exact states used to expose duplicate and missing selections.

\subsection{Valid-Selection Lemma}

\begin{lemma}[Valid local selection]
\label{lem:valid}
Suppose that \(A\) contains all four helpers for \(x_i\) and exactly one of \(\tau_i^t,\tau_i^f\). Then, for every timestamp \(x\),
\[
    L_i(A,x)\le S
    \qquad\text{and hence}\qquad
    \mu_i(A,x)\ge 0.
\]
As a result, \[ \dbf\bigl(\mathcal L_i(A),x\Lambda+h-1\bigr) \le x(S+3+q)+S. \]
\end{lemma}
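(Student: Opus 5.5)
By symmetry between the two selectors, assume without loss of generality that \(\tau_i^t\in A\) and \(\tau_i^f\notin A\); the other case is identical with the roles of \(t_i\) and \(f_i\) exchanged. Then \(s_i(A)=1\), and the local allowance reduces to
\[
L_i(A,x)=\chi_{a_i^{10},b_i}(x)+\chi_{a_i^{01},b_i}(x)+\chi_{a_i^{11},b_i}(x)+S\chi_{a_i^{00},b_i}(x)+q\,\chi_{t_i}(x).
\]
The plan is a case analysis on the residues of \(x\) at the coordinates \(t_i\) and \(f_i\). The first observation is that the four helper patterns specify pairwise distinct residue pairs modulo \(b_i=p_{t_i}p_{f_i}\). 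By the CRT, at most one of the four helper indicators equals \(1\) at any \(x\). Each helper matches only if \(x\) is exact for \(x_i\) with the corresponding state.

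\textbf{Case \(\chi_{t_i}(x)=1\).} The residue at \(t_i\) is the max residue \(p_{t_i}-1\), which is not \(0\) because \(p_{t_i}\ge2\). This rules out the \(01\) and \(00\) helpers. At most one of the \(10\) and \(11\) helpers can match, since they differ at \(f_i\). Hence \(L_i(A,x)\le q+1=S\).

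\textbf{Case \(\chi_{t_i}(x)=0\).} The selector term vanishes, and at most one helper matches. The largest possible helper contribution is \(S\), from the \(00\) helper. Hence \(L_i(A,x)\le S\).

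Both cases give \(\mu_i(A,x)=S-L_i(A,x)\ge0\).

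For the displayed demand bound, substitute \(s_i(A)=1\) into the inequality preceding the lemma,
\[
\dbf\bigl(\mathcal L_i(A),x\Lambda+h-1\bigr)\le x\bigl(S+3+q\,s_i(A)\bigr)+L_i(A,x).
\]
That inequality was derived from Lemma~\ref{lem:gadget-demand} and additivity of the DBF over tasks. Combining it with \(L_i(A,x)\le S\) yields \(x(S+3+q)+S\).

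The only step requiring care is the mutual exclusivity claim in the first case. The \(00\) helper must not fire together with the selector spike. This holds because residue \(0\) and residue \(p_{t_i}-1\) are distinct modulo a prime \(p_{t_i}\ge 2\). Without it, the sum \(S+q\) could arise, so I would state this distinctness explicitly.
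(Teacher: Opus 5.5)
Your proof is correct, but you split the cases differently from the paper. The paper splits on whether \(x\) is exact for \(x_i\). In the exact case it tabulates the four states \((1,0),(0,1),(1,1),(0,0)\) and reads off the values \(S,1,S,S\) for \(L_i(A,x)\). In the non-exact case it notes that no helper matches, so \(L_i(A,x)=q\chi_{t_i}(x)\le q<S\).

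You instead split on whether the selected selector's spike is active, \(\chi_{t_i}(x)\in\{0,1\}\). You then need only two facts. First, the four helper residue pairs are pairwise distinct, so at most one helper matches at any timestamp. Second, the weight-\(S\) helper (state \(00\)) requires residue \(0\) at \(t_i\), which cannot coincide with the max residue \(p_{t_i}-1\). Your route is shorter, needs no separate treatment of non-exact timestamps, and isolates the one incompatibility the bound depends on.

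What the paper's tabulation gives in exchange is the exact list of states where \(L_i(A,x)=S\) is attained (the ``equality cases used later''). Lemma~\ref{lem:lock} and Lemma~\ref{lem:reverse} rely on those equalities. Your argument proves the stated inequality but would need that small extra computation to supply them.
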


\begin{proof}
Assume first that only \(\tau_i^t\) is selected. If \(x\) is exact for \(x_i\), exactly one helper pattern matches, and Table~\ref{tab:valid-selector} lists all possibilities.
\begin{table}[tbp]
\caption{Local spike allowances when only \(\tau_i^t\) is selected. The selector column is an upper allowance; it is an exact excess when that selector matches.}
\label{tab:valid-selector}
\centering
\begin{tabular}{ccccc}
\toprule
State \(\sigma_i(x)\) & Helper & \(\tau_i^t\) & \(L_i(A,x)\) & \(\mu_i(A,x)\)\\
\midrule
\((1,0)\) & \(1\) & \(q\) & \(S\) & \(0\)\\
\((0,1)\) & \(1\) & \(0\) & \(1\) & \(q\)\\
\((1,1)\) & \(1\) & \(q\) & \(S\) & \(0\)\\
\((0,0)\) & \(S\) & \(0\) & \(S\) & \(0\)\\
\bottomrule
\end{tabular}
\end{table}
Therefore \(L_i(A,x)\le S\) and \(\mu_i(A,x)\ge0\) in every exact state.

If \(x\) is not exact for \(x_i\), none of the four helper patterns matches. Therefore \[ L_i(A,x)=q\chi_{t_i}(x)\in\{0,q\}, \] so \(L_i(A,x)\le q<S\) and \(\mu_i(A,x)\ge S-q=1\).

If only \(\tau_i^f\) is selected, interchange the \(t_i\)- and \(f_i\)-coordinates. At exact timestamps its selector allowance is \(q\) in states \((0,1)\) and \((1,1)\), and zero in states \((1,0)\) and \((0,0)\); at non-exact timestamps, \(L_i(A,x)=q\chi_{f_i}(x)\le q<S\). Therefore the same two inequalities hold for every \(x\). The displayed DBF bound now follows from \(s_i(A)=1\) and the local-demand inequality above.
\end{proof}

\noindent\textbf{Equality cases used later.}
If only \(\tau_i^t\) is selected, exact states \((1,0)\), \((1,1)\), and \((0,0)\) have \(L_i(A,x)=S\). If only \(\tau_i^f\) is selected, exact states \((0,1)\), \((1,1)\), and \((0,0)\) have \(L_i(A,x)=S\). At the matching states \((1,0)\) and \((0,1)\), respectively, the selected selector is active, so the local DBF upper bound is attained.

\subsection{Duplicate-Selection Lemma}

\begin{lemma}[Duplicate local selection]
\label{lem:duplicate}
Suppose that \(A\) contains all four helpers and both selectors for \(x_i\). At every exact timestamp \(x\) with \(\sigma_i(x)=(1,1)\),
\[ L_i(A,x)=S+q, \qquad \mu_i(A,x)=-q. \] Also, the local demand is exactly \[ \dbf\bigl(\mathcal L_i(A),x\Lambda+h-1\bigr) =x(S+3+2q)+(S+q). \]
Therefore a duplicate creates an actual overage of \(q=m+1\) above the local spike budget.
\end{lemma}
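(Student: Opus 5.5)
The plan is a direct computation at an exact timestamp in state \((1,1)\). Fix such an \(x\). Exactness means both residues are Boolean, and \(\chi_{t_i}(x)=\chi_{f_i}(x)=1\) forces \(x\equiv p_{t_i}-1\pmod{p_{t_i}}\) and \(x\equiv p_{f_i}-1\pmod{p_{f_i}}\). By the CRT, these two requirements are equivalent to \(x\equiv a_i^{11}\pmod{b_i}\). Hence \(\chi_{a_i^{11},b_i}(x)=1\). The other three helper residues \(a_i^{10},a_i^{01},a_i^{00}\) are distinct classes modulo \(b_i\), because they prescribe different residue pairs, so their indicators vanish. The helper part of \(L_i(A,x)\) is therefore exactly \(1\). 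Both selectors lie in \(A\) and both max-residue indicators equal \(1\), so the selector part is \(2q\). Summing gives \(L_i(A,x)=1+2q=S+q\), using \(S=q+1\), and hence \(\mu_i(A,x)=S-L_i(A,x)=-q\).

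For the exact demand, I would use additivity of \(\dbf\) over the disjoint members of \(\mathcal L_i(A)\) and evaluate each member with Lemma~\ref{lem:gadget-demand}. Each full helper \(G(a_i^s,b_i,w)\) has the exact value \(wx+w\chi_{a_i^s,b_i}(x)\) at the early length. Summing over the four helpers gives \(x(1+1+1+S)+1=x(S+3)+1\).

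The selectors are the only place where equality needs care, because Lemma~\ref{lem:gadget-demand} gives just an upper bound for \(H(p,w)\) in general. That lemma also states that the bound is an equality whenever \(x\equiv p-1\pmod p\), and this holds for both selectors at a \((1,1)\) timestamp. So \(\tau_i^t\) and \(\tau_i^f\) each contribute exactly \(qx+q\).

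Adding everything gives \(x(S+3+2q)+1+2q=x(S+3+2q)+(S+q)\), which is the displayed identity. The concluding remark then follows from \(q=m+1\): the local spike exceeds the budget \(S\) by exactly \(q=m+1\).
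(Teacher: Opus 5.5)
Your proposal is correct and follows the paper's proof essentially step for step. You identify the unique matching $(1,1)$ helper, invoke the equality case of Lemma~\ref{lem:gadget-demand} for both matching selectors, and sum. Your explicit argument that the other three helper classes are distinct modulo $b_i$, and therefore cannot match, is a detail the paper leaves implicit.
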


\begin{proof}
At exact state \((1,1)\), the unique matching helper contributes \(1\), and both selected selectors match their max residues. Thus \(L_i(A,x)=1+2q=S+q\) and \(\mu_i(A,x)=-q\). By Lemma~\ref{lem:gadget-demand}, the helper demand is exactly \(x(S+3)+1\), and each selector contributes \(qx+q\). Summing gives the claimed DBF identity.
\end{proof}

\subsection{Missing-Selection Lemma}

\begin{lemma}[Missing local selection]
\label{lem:missing}
Suppose that \(A\) contains all four helpers and neither selector for \(x_i\). At every exact timestamp \(x\) with \(\sigma_i(x)=(0,0)\),
\[ L_i(A,x)=S, \qquad \mu_i(A,x)=0. \] Additionally, the local demand is exactly \[ \dbf\bigl(\mathcal L_i(A),x\Lambda+h-1\bigr) =x(S+3)+S. \]
Therefore a missing selector contributes no local overage, but it provides exactly zero local margin with which to offset a duplicate elsewhere.
\end{lemma}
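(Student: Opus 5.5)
The argument mirrors the proof of Lemma~\ref{lem:duplicate}, with the selector terms removed. Since \(A\) contains neither \(\tau_i^t\) nor \(\tau_i^f\), we have \(s_i(A)=0\). Both selector terms in \(L_i(A,x)\) carry the factors \(\mathbf 1[\tau_i^t\in A]=\mathbf 1[\tau_i^f\in A]=0\), so they vanish for every timestamp, whatever the residues at \(t_i,f_i\). Thus \(L_i(A,x)\) reduces to the four helper indicators, and \(\mathcal L_i(A)\) consists of the four exact-state helper gadgets only.

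Next I use the exactness of \(x\) together with \(\sigma_i(x)=(0,0)\). These two facts say \(x\equiv 0\pmod{p_{t_i}}\) and \(x\equiv 0\pmod{p_{f_i}}\). By the CRT uniqueness of residues modulo \(b_i=p_{t_i}p_{f_i}\), this is the condition \(x\equiv a_i^{00}\pmod{b_i}\) and none of the other three. The other three residues \(a_i^{10},a_i^{01},a_i^{11}\) each demand residue \(p-1\neq 0\) at some coordinate. Hence \(\chi_{a_i^{00},b_i}(x)=1\) and the other three indicators are \(0\). This gives \(L_i(A,x)=S\) and therefore \(\mu_i(A,x)=S-L_i(A,x)=0\).

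For the demand identity, I apply the exact full-gadget formula of Lemma~\ref{lem:gadget-demand} to each helper. It gives \(\dbf(G(a,b,w),x\Lambda+h-1)=wx+w\chi_{a,b}(x)\) with equality, not merely as an upper bound. The three unit-weight helpers therefore contribute \(x\) each. The \((0,0)\)-helper of weight \(S\) contributes \(Sx+S\). DBF is additive over the indexed multiset union, so the total is \(x(S+3)+S\), as claimed.

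The only point needing care is the unique-match step. I must confirm that \(0\neq p-1\) for each coordinate prime, which holds since \(p\ge 2\); this makes the four exact states pairwise distinct. No inequality or upper-allowance argument enters, because no selector tasks are present. The closing remark of the lemma then follows: there is no overage since \(\mu_i=0\), and there is no positive slack either.
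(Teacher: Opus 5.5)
Your proposal is correct and follows the paper's proof. With no selector in \(A\), only the weight-\(S\) helper matches at an exact \((0,0)\) timestamp, so \(L_i(A,x)=S\); summing the exact full-gadget identities of Lemma~\ref{lem:gadget-demand} over the four helpers gives \(x(S+3)+S\). You additionally spell out the unique-match step (exactness plus \(\sigma_i(x)=(0,0)\) forces residue \(0\) at both coordinates, and \(p-1\neq 0\)), which the paper states in one line.
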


\begin{proof}
At exact state \((0,0)\), only the weight-\(S\) helper matches. Since no selector is selected, \(L_i(A,x)=S\) and
\(\mu_i(A,x)=0\). Lemma~\ref{lem:gadget-demand} gives the exact helper demand \(x(S+3)+S\), proving the claimed DBF identity.
\end{proof}

\subsection{Two-Variable Selector Example}

Let \(n=2\) and \(m=2\), so \(q=3\) and \(S=4\). Consider first the well-formed selection \[ \Gamma_{\mathrm{mand}}\uplus\{\tau_1^t,\tau_2^f\}. \] At an exact timestamp placing \(x_1\) in state \((1,0)\) and \(x_2\) in state \((0,1)\), each variable contributes its full local budget \(S\). The \(X\)-layer spike is therefore
\[ S+S=8. \]

Now consider the malformed selection \[ A=\Gamma_{\mathrm{mand}}\uplus\{\tau_1^t,\tau_1^f\}, \] which selects both selectors for \(x_1\) and none for \(x_2\). It still contains exactly \(n=2\) selectors, so the global selector count alone does not reject it. By the CRT, choose one timestamp placing \(x_1\) in exact state \((1,1)\) and \(x_2\) in exact state \((0,0)\). The local spikes are \[ x_1:\ 1+q+q=7, \qquad x_2:\ S=4. \] Therefore the \(X\)-layer spike is \[ 7+4=11=2S+q, \] which exceeds its intended budget \(2S=8\) by \(q=3\).

This local calculation becomes a DBF violation after the global parameters are fixed. Section~\ref{sec:construction} sets \[ h=W_0+nS+u+m. \] In the proof of Lemma~\ref{lem:lock}, the CRT timestamp is also chosen so that the anchor contributes \(W_0\), the \(Y\)-validity layer contributes \(u\), and every selected selector needed by the malformed selection matches. Even if the clause layer contributes nothing, the total spike is at least
\[
    W_0+nS+q+u
    =W_0+nS+u+m+1
    =h+1.
\]
At the early window \(\ell=x\Lambda+h-1\), Lemma~\ref{lem:gadget-demand} therefore gives \[ \dbf(A,\ell)\ge x\Lambda+h+1=\ell+2, \] which is a deadline miss. This is the precise connection between the local selector calculation, the generated tasks, and the global DBF inequality.

With the temporal gadgets and local selector lock in place, the reduction can now be assembled layer by layer. The construction below chooses the global weights so that the local budget \(S\), the logical clause contribution, and the one-unit feasibility gap share one common early-window threshold.

\section{Reduction Construction}
\label{sec:construction}

We now specify the residue patterns and assemble the task inventory summarized in Section~\ref{sec:construction-preview}.

\subsection{Residue Patterns}

Assign the first \(2n+u\) primes to the coordinates \(t_i,f_i,y_j\). For each pattern, compute its canonical CRT residue
\(0\le a_\rho<b_\rho\), where \(b_\rho\) is the product of its coordinate primes. These computations take polynomial time; Section~\ref{sec:boundedness} bounds the primes, moduli, and explicit output size.

The anchor \(G(0,1,W_0)\) matches every timestamp. Let \(\mathcal P\) contain the four weighted \(X\)-helper patterns per variable from Section~\ref{sec:selector-lock}, the \(Y\)-validity patterns, and the clause patterns defined below. Patterns form an indexed multiset, so repeated congruences retain their separate demand contributions.

For each \(y_j\), include two unit-weight validity patterns, with residues \(0\) and \(p_{y_j}-1\), encoding false and true. At most one matches at any timestamp, and neither matches a non-Boolean residue.

For each clause \(\mathcal C_k\), let \(Z_k\) contain one coordinate \(y_j\) for each universal variable occurring in that clause and both \(t_i,f_i\) for each existential variable occurring in it. Set \(b_k=\prod_{z\in Z_k}p_z\). For each of the clause's seven satisfying assignments \(\gamma\), include the unit-weight pattern \(a_{k,\gamma}\pmod{b_k}\) encoding \(\gamma\). Existential true and false require the exact states \((1,0)\) and \((0,1)\), respectively; universal true and false require residues \(p_{y_j}-1\) and \(0\). The seven patterns are pairwise disjoint. At a valid Boolean encoding, exactly one matches if the clause is satisfied, and none otherwise. Reading both existential coordinates prevents states \((1,1)\) and \((0,0)\) from matching a clause pattern that mentions that variable.

\subsection{Assembly}

Instantiate each \(\rho\in\mathcal P\) as \(G(a_\rho,b_\rho,w_\rho)\), include the anchor and optional selectors, and use the timing constants, task multisets, and value assignment defined in Section~\ref{sec:construction-preview}. The mandatory pattern weights sum to \(B_{\mathrm{mand}}=\sum_{\rho\in\mathcal P}w_\rho=n(S+3)+2u+7m\), excluding the anchor. The individual task counts are \(M=1+\sum_{\rho\in\mathcal P}b_\rho\), \(|\Gamma_{\mathrm{opt}}|=2n\), and \(|\Gamma|=M+2n\); \(m\) counts clauses. These patterns therefore complete the instance \((\Gamma,K)\) specified in the overview. Figure~\ref{fig:early-spike-budget} summarizes the early-window gap used in the correctness proof.

\begin{figure}[!ht]
\centering
\includegraphics[width=0.97\linewidth]{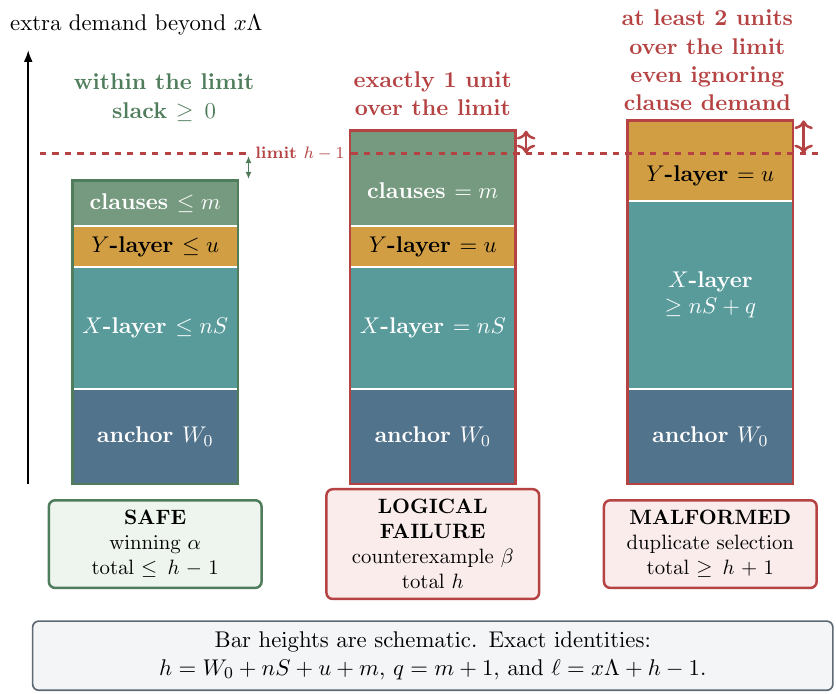}
\caption{Excess demand above \(x\Lambda\) at the early test length \(\ell=x\Lambda+h-1\). A selection encoding a winning assignment stays at or below \(h-1\) at every timestamp. For a well-formed selection encoding \(\alpha\), a matching timestamp encoding a counterexample \(\beta\) with \(F(\alpha,\beta)=\mathrm{true}\) realizes exact excess \(h\), producing a one-unit overload. For a malformed selection containing all mandatory tasks and exactly \(n\) selectors, CRT realizes excess at least \(h+1\), even when clause demand is ignored.}
\label{fig:early-spike-budget}
\end{figure}

\section{Correctness Proof}
\label{sec:correctness}

The proof has two phases:
\begin{enumerate}
    \item The \emph{structural phase} identifies the only possible shape of a feasible selection meeting the value target. Value forces every mandatory task (Lemma~\ref{lem:mandatory}), value and utilization together force exactly \(n\) selectors in total (Lemma~\ref{lem:squeeze}), and the local lock distributes them one per existential variable (Lemma~\ref{lem:lock}). The selection therefore defines one assignment \(\alpha\) (Lemma~\ref{lem:extract}).
    \item The \emph{logical phase} relates that assignment to the source formula. The forward direction proves that a winning \(\alpha\) gives a feasible selection. The reverse direction turns any losing response \(\beta\) into a violating early window (Lemmas~\ref{lem:forward} and~\ref{lem:reverse}).
\end{enumerate}

As everywhere in the paper, \(A\) denotes a selected task set, while \(S\) is the local budget constant of Section~\ref{sec:selector-lock}.

\subsection{Mandatory Selection Lemma}

\begin{lemma}[Mandatory selection]
\label{lem:mandatory}
Every selection meeting the value target must include all mandatory tasks.
\end{lemma}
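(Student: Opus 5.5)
The argument is a direct counting bound on value, by contradiction. Suppose a selection \(A\subseteq\Gamma\) meets the value target, \(V(A)\ge K=MV_0+n\), but omits at least one task of \(\Gamma_{\mathrm{mand}}\). I would bound \(V(A)\) from above using only the value assignment of Section~\ref{sec:construction-preview}. This needs no scheduling facts, no DBF lemmas, and not even feasibility: the statement concerns value alone.

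First, split \(V(A)\) into its mandatory and optional parts. Every task in \(\Gamma_{\mathrm{mand}}\) has value \(V_0=2n+1\). Because \(A\) omits at least one of the \(M\) mandatory tasks, its mandatory part contributes at most \((M-1)V_0\). The optional part consists of selectors from \(\Gamma_{\mathrm{opt}}\), which has exactly \(2n\) tasks of value \(1\) each, so it contributes at most \(2n\). Together,
\[
V(A)\le (M-1)V_0+2n=MV_0-(2n+1)+2n=MV_0-1.
\]

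Next, compare this with the target. Since \(n\ge0\) (indeed \(n\ge1\) by the standing assumption), \(MV_0-1<MV_0+n=K\), contradicting \(V(A)\ge K\). Hence \(A\supseteq\Gamma_{\mathrm{mand}}\).

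The only point needing care is the multiset bookkeeping. Omitting a mandatory task means omitting an indexed element of \(\Gamma_{\mathrm{mand}}\), so the loss is a full \(V_0\) even when congruences repeat. The choice \(V_0=2n+1>2n=|\Gamma_{\mathrm{opt}}|\) is exactly what makes a single omission uncompensable. I do not expect any real obstacle here; the lemma is essentially this one inequality.
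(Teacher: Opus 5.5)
Your proposal is correct and matches the paper's proof: a selection omitting a mandatory task has value at most \((M-1)V_0+2n\), which is strictly below \(K=MV_0+n\). Your explicit simplification to \(MV_0-1\) is a slightly sharper version of the paper's bound, though note that the comparison only requires \(V_0>n\), so \(V_0=2n+1\) is a sufficient choice rather than exactly the threshold.
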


\begin{proof}
    There are \(M\) mandatory tasks, each with value \(V_0\), and \(2n\) optional selector tasks, each with value \(1\). If a selection excludes even one mandatory task, its value is at most
    \[
        (M-1)V_0+2n.
    \]
    Since \(V_0>2n\), this is strictly less than
    \[
        MV_0+n=K.
    \]
    Therefore every selection meeting the value target includes all mandatory tasks.
\end{proof}

\subsection{Utilization Squeeze Lemma}

\begin{lemma}[Utilization squeeze]
\label{lem:squeeze}
Every feasible selection meeting the value target contains exactly \(n\) selected optional selectors.
\end{lemma}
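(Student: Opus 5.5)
The plan is to combine a value lower bound with a utilization upper bound. Let \(A\) be a feasible selection meeting the value target, and let \(s\) denote the number of optional selectors in \(A\).

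\textbf{Step 1: at least \(n\) selectors.} By Lemma~\ref{lem:mandatory}, \(A\supseteq\Gamma_{\mathrm{mand}}\). The mandatory tasks contribute exactly \(MV_0\), and each selector contributes value \(1\). Hence
\[
V(A)=MV_0+s,
\]
and the requirement \(V(A)\ge K=MV_0+n\) forces \(s\ge n\).

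\textbf{Step 2: at most \(n\) selectors.} We compute the utilization of \(A\) using Lemma~\ref{lem:gadget-demand}, under which every gadget \(G(a,b,w)\) and every selector \(H(p,w)\) has utilization \(w/\Lambda\).

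\begin{itemize}
\item The anchor contributes \(W_0/\Lambda\).
\item Each pattern gadget \(\rho\in\mathcal P\) contributes \(w_\rho/\Lambda\). These sum to \(B_{\mathrm{mand}}/\Lambda\).
\item Each selected selector contributes \(q/\Lambda\).
\end{itemize}

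Since indexed multiset union makes utilization additive,
\[
U(A)=\frac{W_0+B_{\mathrm{mand}}+sq}{\Lambda}.
\]
By the definition \(\Lambda=W_0+B_{\mathrm{mand}}+nq\), this equals
\[
U(A)=1+\frac{(s-n)q}{\Lambda}.
\]
Feasibility requires \(U(A)\le 1\). This holds either by Theorem~\ref{thm:dbf-criterion}, or by the remark that \(\dbf(A,\ell)\le\ell\) for all \(\ell\) implies \(U(A)\le1\) through the limit \(\dbf(A,\ell)/\ell\to U(A)\). Because \(q=m+1\ge1\), we get \(s\le n\).

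Combining the two steps gives \(s=n\).

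\textbf{Main obstacle.} There is no real difficulty; the only care needed is bookkeeping. Two points must be checked:
\begin{itemize}
\item \(B_{\mathrm{mand}}\) counts gadget weights, not individual tasks, and each gadget's utilization is its weight over \(\Lambda\) regardless of how many tasks it expands into.
\item Repeated congruences in the indexed multiset \(\mathcal P\) are counted separately, consistent with the sum \(B_{\mathrm{mand}}=\sum_\rho w_\rho\).
\end{itemize}
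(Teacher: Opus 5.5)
Your proposal is correct and follows essentially the same route as the paper: Lemma~\ref{lem:mandatory} together with the value target forces $s\ge n$, and the utilization identity $U(A)=(W_0+B_{\mathrm{mand}}+sq)/\Lambda$ with $\Lambda=W_0+B_{\mathrm{mand}}+nq$, combined with $U(A)\le 1$, forces $s\le n$. The paper leaves implicit your explicit use of $q\ge 1$ and your check that $B_{\mathrm{mand}}$ counts gadget weights rather than tasks; these are harmless clarifications.
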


\begin{proof}
By Lemma~\ref{lem:mandatory}, every selection meeting the value target contains all mandatory tasks. Since the mandatory tasks contribute exactly \(MV_0\) value, the target \[ K=MV_0+n \] forces the selection to contain at least \(n\) optional selectors.

Let \(s\) be the number of selected optional selectors. The mandatory gadgets contribute utilization \[ \frac{W_0+B_{\mathrm{mand}}}{\Lambda}, \] and each selected optional selector contributes \(q/\Lambda\). Therefore \[ U=\frac{W_0+B_{\mathrm{mand}}+sq}{\Lambda}. \] Since \[ \Lambda=W_0+B_{\mathrm{mand}}+nq, \]
feasibility implies \(U\le1\) and hence \(s\le n\). Combining the value requirement \(s\ge n\) with the utilization requirement \(s\le n\) gives \(s=n\). Therefore every feasible selection meeting the value target contains exactly \(n\) selected optional selectors.
\end{proof}

\subsection{Local Selector-Lock Lemma}
With the timing constants in place, the local lemmas of Section~\ref{sec:selector-lock} combine into the global selector lock.

\begin{lemma}[Local selector lock]
\label{lem:lock}
Every feasible selection meeting the value target contains at most one selected selector for each existential variable. Consequently, for every existential variable \(x_i\), it contains exactly one member of \(\{\tau_i^t,\tau_i^f\}\).
\end{lemma}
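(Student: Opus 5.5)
We argue by contradiction. Let \(A\) be a feasible selection meeting the value target, and suppose some existential variable \(x_{i^\ast}\) has both \(\tau_{i^\ast}^t,\tau_{i^\ast}^f\in A\). By Lemma~\ref{lem:mandatory}, \(A\) contains every mandatory task. By Lemma~\ref{lem:squeeze}, \(A\) contains exactly \(n\) selectors, so \(\sum_i s_i(A)=n\). Since \(s_{i^\ast}(A)=2\), every variable satisfies \(s_i(A)\in\{0,1,2\}\), and the number of variables with \(s_i(A)=0\) equals the number with \(s_i(A)=2\). Let \(D\), \(O\), and \(N\) denote the sets of duplicated, singly selected, and missing variables. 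The second sentence of the statement follows from the first, since at most one per variable together with a total of \(n\) forces exactly one per variable.

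Next we choose one timestamp \(x\) by the CRT, prescribing residues on every coordinate. For \(i\in D\), put \(x_i\) in exact state \((1,1)\). For \(i\in N\), use exact state \((0,0)\). For \(i\in O\), use the exact state in which the selected selector matches: \((1,0)\) if \(\tau_i^t\) is selected and \((0,1)\) otherwise. For each \(y_j\), take any Boolean residue, say \(0\). Because all primes are distinct, such an \(x\ge0\) exists. At this \(x\), every selected selector matches its max residue, so every selector bound in Lemma~\ref{lem:gadget-demand} is an equality, and every full gadget is exact.

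We then compute \(\dbf(A,\ell)\) at \(\ell=x\Lambda+h-1\) as the baseline plus the spikes. The baseline equals \(x\) times the total weight \(W_0+B_{\mathrm{mand}}+nq=\Lambda\), giving \(x\Lambda\). The spikes come from four sources:
\begin{itemize}
\item The anchor always matches, contributing \(W_0\).
\item Each \(Y\)-variable has exactly one matching validity gadget, contributing \(u\) in total.
\item The clause gadgets contribute a nonnegative amount.
\item The local contributions \(L_i(A,x)\) are \(S+q\) for \(i\in D\) by Lemma~\ref{lem:duplicate}, \(S\) for \(i\in N\) by Lemma~\ref{lem:missing}, and \(S\) for \(i\in O\) by the equality cases following Lemma~\ref{lem:valid}.
\end{itemize}
The total spike is therefore at least \(W_0+u+nS+|D|q\ge W_0+nS+u+m+1=h+1\), since \(q=m+1\) and \(|D|\ge1\). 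Hence \(\dbf(A,\ell)\ge x\Lambda+h+1=\ell+2>\ell\), which contradicts feasibility by Theorem~\ref{thm:dbf-criterion}.

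The main obstacle is justifying that the DBF of \(A\) is exactly additive with all these contributions at the chosen \(x\). The lower bound needs equality for the selectors, which the CRT choice secures. Clause gadgets need only nonnegativity of their spikes, since their baselines are exact, so nothing can offset the overage. Everything else is the bookkeeping above.
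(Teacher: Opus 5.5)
Your proposal is correct and follows the paper's proof essentially step for step. It uses the same CRT timestamp: duplicated variables at \((1,1)\), missing variables at \((0,0)\), singly selected variables at their matching valid state, and every \(y_j\) at residue \(0\). It also has the same local total \(nS+|D|q\), the same early-window bound \(\dbf(A,x\Lambda+h-1)\ge x\Lambda+h+1\), and the same counting step from exactly \(n\) selectors to exactly one selector per variable.
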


\begin{proof}
Consider a feasible selection \(A\) meeting the value target. By Lemma~\ref{lem:squeeze}, \(A\) contains exactly \(n\) selected optional selectors. Suppose for contradiction that both selectors are selected for some variable, and let \(d\ge1\) be the number of variables with both selectors selected. The number of variables with neither selector selected is also \(d\). Relative to the intended one selector per variable, each duplicate contributes one excess selector and each missing variable contributes one deficit.
Use the CRT to choose a timestamp \(x\) satisfying the following exact residue requirements:
\begin{itemize}
    \item every variable with both selectors selected is placed in state \((1,1)\);
    \item every variable with neither selector selected is placed in state \((0,0)\);
    \item every variable with exactly one selected selector is placed in a budget-reaching valid state: \((1,0)\) if \(\tau_i^t\) is selected and \((0,1)\) if \(\tau_i^f\) is selected; and 
    \item every universal coordinate \(y_j\) is placed at the residue \(0\pmod{p_{y_j}}\), which is one of its Boolean residues.
\end{itemize}
Each requirement fixes one residue modulo its coordinate prime, and no coordinate receives two different requirements. Because all coordinate moduli are distinct primes, they are pairwise coprime. The Chinese Remainder Theorem therefore gives a simultaneous solution that is unique modulo the product of the coordinate primes. Choose the least nonnegative representative as \(x\). As noted in Section~\ref{sec:construction}, this representative is computable in polynomial time by the constructive CRT algorithm.

At the chosen timestamp, every variable with exactly one selected selector is in the matching valid state for that selector. The equality cases following Lemma~\ref{lem:valid} therefore give \(L_i(A,x)=S\) for each such variable. Lemma~\ref{lem:missing} gives \(L_i(A,x)=S\) for each missing variable, and Lemma~\ref{lem:duplicate} gives \(L_i(A,x)=S+q\) for each duplicated variable. Because there are \(d\) duplicated variables, \(d\) missing variables, and \(n-2d\) one-selector variables,
\[
    \begin{aligned}
    \sum_{i=1}^{n}L_i(A,x)
    &=(n-2d)S+dS+d(S+q)\\
    &=nS+dq\\
    &\ge nS+q.
    \end{aligned}
\]
These are actual spike contributions, not just upper allowances: every selected selector matches its max residue at the chosen timestamp, and every selected helper state is exact.

The \(Y\)-validity layer contributes its exact spike \(u\), because every universal coordinate is at a Boolean residue. The dummy anchor contributes its exact spike \(W_0\), and the clause-layer spike is nonnegative. Moreover, every mandatory full gadget contributes its exact baseline demand and the \(n\) selected selectors contribute total baseline demand \(xnq\). Therefore Lemma~\ref{lem:gadget-demand} gives
\[
    \begin{aligned}
    \dbf(A,x\Lambda+h-1)
    &\ge
    x(W_0+B_{\mathrm{mand}}+nq)
    +W_0+nS+dq+u\\
    &\ge
    x\Lambda+W_0+nS+q+u\\
    &=
    x\Lambda+h+1\\
    &>
    x\Lambda+h-1,
    \end{aligned}
\]
where the equality uses \(q=m+1\) and \(h=W_0+nS+u+m\). This contradicts feasibility, so no variable has both selectors selected. Finally, \(A\) contains exactly \(n\) selectors by Lemma~\ref{lem:squeeze}; because there are \(n\) variables and each has at most one selector, every variable has exactly one selector.
\end{proof}

\subsection{Assignment Extraction Lemma}

\begin{lemma}[Assignment extraction]
\label{lem:extract}
Every feasible selection meeting the value target defines a unique assignment to the existential variables.
\end{lemma}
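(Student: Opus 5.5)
The plan is to read the assignment directly off the selection, once the earlier structural lemmas have pinned down its shape. Let \(A\) be a feasible selection meeting the value target. By Lemma~\ref{lem:mandatory}, \(A\) contains every task in \(\Gamma_{\mathrm{mand}}\). By Lemma~\ref{lem:squeeze}, it contains exactly \(n\) optional selectors. By Lemma~\ref{lem:lock}, for every existential variable \(x_i\) it contains exactly one member of \(\{\tau_i^t,\tau_i^f\}\). Together these say that \(A\) is well formed in the sense of Section~\ref{sec:construction-preview}.

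Next I will define \(\alpha\) by setting \(\alpha(x_i)=\mathrm{true}\) if \(\tau_i^t\in A\) and \(\alpha(x_i)=\mathrm{false}\) if \(\tau_i^f\in A\). Exactly one case holds for each \(i\), so this is a total function on \(X\) and it is well defined. Uniqueness comes from the fact that the rule is forced. The only optional tasks are the \(2n\) selectors, so \(A\cap\Gamma_{\mathrm{opt}}\) determines \(\alpha\), and conversely \(\alpha\) determines \(A\cap\Gamma_{\mathrm{opt}}=\{\tau_i^{\alpha(x_i)}\}\). Hence \(A=\Gamma_{\mathrm{mand}}\uplus\{\tau_i^t:\alpha(x_i)=\mathrm{true}\}\uplus\{\tau_i^f:\alpha(x_i)=\mathrm{false}\}\), and the correspondence between well-formed selections and assignments is a bijection.

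There is no real obstacle here, since all the work has already been done in Lemma~\ref{lem:lock}. The one point that needs care is the indexed-multiset bookkeeping. Selectors \(\tau_i^t\) and \(\tau_i^f\) for different \(i\) use distinct primes, so as indexed tasks they are distinct even if some timing tuples happened to coincide. That distinctness is what makes the indicator \(\mathbf 1[\tau_i^t\in A]\) meaningful and the extraction unambiguous. I will also record, for use in Lemmas~\ref{lem:forward} and~\ref{lem:reverse}, that the extracted selection has value exactly \(K\) and utilization exactly one.
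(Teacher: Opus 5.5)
Your proposal is correct and follows the paper's proof: both invoke Lemma~\ref{lem:lock} to get exactly one of \(\tau_i^t,\tau_i^f\) per existential variable, then read off \(\alpha\) by the same rule. The added citations of Lemmas~\ref{lem:mandatory} and~\ref{lem:squeeze}, and the bijection with well-formed selections, are harmless elaborations; your caveat about coinciding timing tuples is vacuous, since distinct primes already give the selectors distinct tuples and indexed tasks are distinct regardless.
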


\begin{proof}
Let \(A\) be a feasible selection meeting the value target. By Lemma~\ref{lem:lock}, exactly one member of \(\{\tau_i^t,\tau_i^f\}\) is selected for every \(x_i\). Define
\[
    \alpha(x_i)=
    \begin{cases}
        \mathrm{true}, & \tau_i^t \text{ is selected},\\
        \mathrm{false}, & \tau_i^f \text{ is selected}.
    \end{cases}
\]
Exactly one case applies to each \(x_i\), so \(\alpha\) is a well-defined and unique assignment.
\end{proof}

The structural phase is now complete: a feasible selection meeting the value target is not just a set of the right cardinality, but a unique assignment \(\alpha\). The remaining question is logical. The forward direction shows that a winning \(\alpha\) survives every early-window test. The reverse direction converts any losing response \(\beta\) into one explicit violating window.

\subsection{Forward Direction}

\begin{lemma}[Forward direction]
\label{lem:forward}
If the source formula \(\exists X\forall Y\,\varphi(X,Y)\) is true, then the constructed \(\VTS\) instance has a feasible selected subset meeting the value target \(K\).
\end{lemma}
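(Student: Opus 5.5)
Fix a winning assignment \(\alpha\), so that \(\varphi(\alpha,\beta)\) holds for every \(\beta\); equivalently, no \(\beta\) satisfies \(F(\alpha,\beta)\). Take the well-formed selection \(A=\Gamma_{\mathrm{mand}}\uplus\{\tau_i^{\alpha}\}_{i=1}^n\), where \(\tau_i^\alpha=\tau_i^t\) if \(\alpha(x_i)\) is true and \(\tau_i^f\) otherwise. Its value is \(MV_0+n=K\), and its utilization is \((W_0+B_{\mathrm{mand}}+nq)/\Lambda=1\). By Theorem~\ref{thm:dbf-criterion} and the remark after the definition of \(\VTS\), it remains to show \(\dbf(A,\ell)\le\ell\) for every \(\ell>0\). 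By Lemma~\ref{lem:jump} and Lemma~\ref{lem:jump-forms}, it suffices to check the lengths \(\ell=x\Lambda\) and \(\ell=x\Lambda+h-1\) for integers \(x\ge0\).

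The baseline lengths are immediate. By Lemma~\ref{lem:gadget-demand} and additivity, \(\dbf(A,x\Lambda)\le x(W_0+B_{\mathrm{mand}}+nq)=x\Lambda\).

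For early lengths, the same lemma gives \(\dbf(A,x\Lambda+h-1)\le x\Lambda+E(x)\). Here \(E(x)\) is the sum of four spike allowances:
\begin{itemize}
\item the anchor spike \(W_0\);
\item the local allowances \(\sum_i L_i(A,x)\);
\item the \(Y\)-validity spikes;
\item the clause-pattern spikes.
\end{itemize}
The goal is \(E(x)\le h-1=W_0+nS+u+m-1\). Lemma~\ref{lem:valid} gives \(L_i(A,x)\le S\) for each \(i\). Each \(y_j\) has two disjoint validity residues, so the \(Y\)-layer contributes at most \(u\). Each clause's seven patterns are pairwise disjoint, so each clause contributes at most one unit and the clause layer at most \(m\). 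This gives \(E(x)\le W_0+nS+u+m=h\), one unit too many. The main step is to recover that unit. Suppose, for contradiction, that \(E(x)=h\). Then every bound above is tight:
\begin{itemize}
\item \(L_i(A,x)=S\) for every \(i\);
\item every \(y_j\) sits at a Boolean residue;
\item every clause \(\mathcal C_k\) has a matching satisfying pattern.
\end{itemize}

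Next, read an assignment off the tight timestamp. From the proof of Lemma~\ref{lem:valid}, a non-exact timestamp gives \(L_i\le q<S\). So each \(x_i\) is exact, and by the equality cases it is in the selected valid state or in \((1,1)\) or \((0,0)\). Define \(\beta(y_j)\) as true exactly when \(x\equiv p_{y_j}-1\pmod{p_{y_j}}\); this is well defined because each \(y_j\) is Boolean. Fix a clause \(\mathcal C_k\) and its matching pattern \(\gamma\). Because \(\gamma\) reads both coordinates \(t_i,f_i\), every existential variable in \(\mathcal C_k\) is in state \((1,0)\) or \((0,1)\) at \(x\). In either state, tightness \(L_i=S\) combined with the table in Lemma~\ref{lem:valid} forces that state to be the one matching the selected selector. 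For example, if \(\tau_i^t\) is selected, state \((0,1)\) gives \(L_i=1<S\); so a matching state \((0,1)\) would require \(\tau_i^f\) to be selected. Hence \(\gamma\) agrees with \(\alpha\) on its existential variables, and it agrees with \(\beta\) on its universal variables by definition of \(\beta\). Since \(\gamma\) satisfies \(\mathcal C_k\), so does \((\alpha,\beta)\). This holds for every clause, so \(F(\alpha,\beta)\) is true, contradicting that \(\alpha\) is winning.

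Therefore \(E(x)\le h-1\) and \(\dbf(A,x\Lambda+h-1)\le x\Lambda+h-1\) for every \(x\), so \(A\) is feasible and meets the target. The delicate point is the tightness argument. One must check that states \((1,1)\) and \((0,0)\) cannot both attain \(L_i=S\) and let a clause pattern match; the clause patterns' reading of both coordinates is exactly what excludes this.
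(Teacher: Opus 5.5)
Your proposal is correct and follows the paper's proof essentially step for step. It uses the same well-formed selection, the same reduction to the windows $x\Lambda$ and $x\Lambda+h-1$, the same bound of the early excess by $h$, and the same tightness argument forcing each matched clause pattern to agree with the selected selector, so that $F(\alpha,\beta)$ holds. The only differences are cosmetic: you argue with the allowance sum $E(x)$ rather than the actual excess $\Delta_\alpha(x)$, and your exactness step for every $x_i$ is harmless but not needed.
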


\begin{proof}
    Let \(\alpha\) be a winning assignment for \(X\), so \(\varphi(\alpha,\beta)\) is true for every assignment \(\beta\) to \(Y\). Equivalently, \(F(\alpha,\beta)\) is false for every \(\beta\).

    \noindent\textbf{Construct the selection.}
    Construct a selection \(A_\alpha\) as follows. Select every mandatory task. For each existential variable \(x_i\), select \(\tau_i^t\) if \(\alpha(x_i)=\mathrm{true}\), and select \(\tau_i^f\) otherwise. The value is exactly \(MV_0+n=K\), and the utilization is exactly
    \[
        \frac{W_0+B_{\mathrm{mand}}+nq}{\Lambda}=1.
    \]

    \noindent\textbf{Baseline windows.}
    The remaining step is to show that the demand never exceeds the length of any relevant window. By Lemmas~\ref{lem:jump} and~\ref{lem:jump-forms}, it is enough to check windows of the form \(x\Lambda\) and \(x\Lambda+h-1\). At \(x\Lambda\), the full gadgets contribute exactly \[ x(W_0+B_{\mathrm{mand}}), \] and the \(n\) selected optional max-residue gadgets contribute at most \[ xnq. \] Therefore \[ \dbf(A_\alpha,x\Lambda)\le x\Lambda. \]
          
    \noindent\textbf{Early windows.}
    Now consider an early window \(x\Lambda+h-1\), and let \[ \Delta_\alpha(x)=\dbf(A_\alpha,x\Lambda+h-1)-x\Lambda \] be the actual demand beyond the baseline budget \(x\Lambda\). By Lemma~\ref{lem:gadget-demand} and the spike-accounting convention, this excess is at most the sum of the matching spike allowances, as summarized in Figure~\ref{fig:early-spike-budget}. The dummy anchor contributes \(W_0\). By Lemma~\ref{lem:valid}, the total \(X\)-layer spike allowance
    is at most \(nS\). The \(Y\)-validity layer contributes at most \(u\), since at most one of the two validity patterns for each \(y_j\) can match.
    The clause layer contributes at most \(m\), since at most one satisfying pattern for each clause can match. Therefore \[ \Delta_\alpha(x)\le W_0+nS+u+m=h. \] Both \(\dbf(A_\alpha,x\Lambda+h-1)\) and \(x\Lambda\) are integers, so \(\Delta_\alpha(x)\) is an integer. As a result, this bound leaves only the safe case \(\Delta_\alpha(x)\le h-1\) and the equality case \(\Delta_\alpha(x)=h\). If \(\Delta_\alpha(x)\le h-1\), then \[ \dbf(A_\alpha,x\Lambda+h-1)\le x\Lambda+h-1,
    \] as required.
                
    \noindent\textbf{Ruling out \(\Delta_\alpha(x)=h\).}
    Suppose that \(\Delta_\alpha(x)=h\). The preceding upper bound was obtained by adding the anchor contribution \(W_0\), the \(X\)-layer bound \(nS\) from Lemma~\ref{lem:valid}, the \(Y\)-validity bound \(u\), and the clause-layer bound \(m\) from Section~\ref{sec:construction}. Because these four bounds sum to \(h\), equality can hold only if every individual layer bound is tight.
    As a result:
    \begin{itemize}
        \item every \(y_j\) is at a Boolean residue, defining an assignment
                    \(\beta\);
        \item every clause of \(F\) has one matched satisfying pattern; and
        \item every \(x_i\) attains its full local spike allowance \(S\).
    \end{itemize}
    Consider any matched clause pattern that mentions \(x_i\). It fixes the exact pair \(\sigma_i(x)=(1,0)\) or \((0,1)\). If the fixed pair disagreed with the selected selector, say the pattern fixes \((0,1)\) while \(\tau_i^t\) is selected, then at this timestamp the helper contributes \(1\) and the selector does not spike, so \(x_i\) contributes \(1<S\), contradicting tightness of the \(X\)-layer. Therefore every matched clause pattern agrees with \(\alpha\) on the existential variables it mentions and with \(\beta\) on the universal variables it mentions, and the matched patterns certify that
    \[
      F(\alpha,\beta)=\mathrm{true}.
    \]
    This contradicts the assumption that \(\alpha\) is winning. Therefore \(\Delta_\alpha(x)\ne h\). Since \(\Delta_\alpha(x)\) is an integer and the preceding bound gives \(\Delta_\alpha(x)\le h\), it follows that \(\Delta_\alpha(x)\le h-1\). Therefore
    \[
        \dbf(A_\alpha,x\Lambda+h-1)
        \le x\Lambda+h-1
    \]
    at every early window.
\end{proof}
                          
\subsection{Reverse Direction}
                          
\begin{lemma}[Reverse direction]
\label{lem:reverse}
    If the constructed \(\VTS\) instance has a feasible selected subset meeting the value target \(K\), then the source formula \(\exists X\forall Y\,\varphi(X,Y)\) is true.
\end{lemma}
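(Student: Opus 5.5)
We argue by contraposition on the logical side, after using the structural lemmas to pin down the shape of the selection. Let \(A\) be a feasible selection with \(V(A)\ge K\). By Lemmas~\ref{lem:mandatory}, \ref{lem:squeeze}, and~\ref{lem:lock}, \(A\) is well formed: it contains all of \(\Gamma_{\mathrm{mand}}\) and exactly one of \(\tau_i^t,\tau_i^f\) for each \(x_i\). Lemma~\ref{lem:extract} then yields a unique existential assignment \(\alpha\). We claim \(\alpha\) is winning, that is, \(\varphi(\alpha,\beta)\) holds for every \(\beta\). Suppose not, and fix \(\beta\) with \(F(\alpha,\beta)\) true. We will build a window index \(x\) at which \(\dbf(A,x\Lambda+h-1)>x\Lambda+h-1\), contradicting feasibility.

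\textbf{Choosing the timestamp.} Use the CRT to pick \(x\) satisfying one residue requirement per coordinate. For each \(x_i\), place the pair \((t_i,f_i)\) in the exact state \((1,0)\) if \(\alpha(x_i)\) is true and in \((0,1)\) otherwise; this is the selected selector's matching valid state. For each \(y_j\), use residue \(p_{y_j}-1\) if \(\beta(y_j)\) is true and \(0\) otherwise. Each coordinate gets exactly one requirement, and the primes are distinct, so such an \(x\) exists.

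\textbf{Counting the spikes exactly.} At this \(x\), every selected selector matches its max residue, so by Lemma~\ref{lem:gadget-demand} its demand equals its upper bound \(qx+q\). Every full gadget's demand is exact. The contributions are as follows.
\begin{itemize}
\item The anchor contributes \(W_0\).
\item By the equality cases after Lemma~\ref{lem:valid}, each \(x_i\) contributes \(L_i(A,x)=S\), for a total of \(nS\).
\item Each \(y_j\) sits at a Boolean residue, so exactly one validity pattern matches, for a total of \(u\).
\item Each clause \(\mathcal C_k\) is satisfied by \((\alpha,\beta)\). The restriction of \((\alpha,\beta)\) to the clause's three variables is one of its seven satisfying assignments \(\gamma\). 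The residues of \(x\) on \(Z_k\) encode exactly \(\gamma\), since existential variables are in their exact states \((1,0)\) or \((0,1)\) and universal ones are at their Boolean residues. So exactly one clause pattern matches per clause, for a total of \(m\).
\end{itemize}

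Summing the baselines \(x(W_0+B_{\mathrm{mand}}+nq)=x\Lambda\) with these spikes gives
\[
\dbf(A,x\Lambda+h-1)=x\Lambda+W_0+nS+u+m=x\Lambda+h>x\Lambda+h-1.
\]
This contradicts feasibility, so no such \(\beta\) exists, \(\alpha\) is winning, and \(\exists X\forall Y\,\varphi(X,Y)\) is true.

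The main care point is showing that the lower bound is attained exactly rather than only bounded above. This hinges on every selected \(H\)-gadget matching at \(x\), which holds because the timestamp places each variable in the state matching its selected selector. It also requires that each clause contributes exactly one unit, which follows because the existential coordinates were put in exact valid states consistent with \(\alpha\).
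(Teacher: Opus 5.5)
Your proposal is correct and follows the paper's proof essentially step for step. You use the structural lemmas to obtain a well-formed selection and its assignment \(\alpha\), choose the CRT timestamp that places each \(x_i\) in the exact state matching its selected selector and each \(y_j\) at the residue encoding \(\beta\), and observe that every bound in Lemma~\ref{lem:gadget-demand} is then attained, so \(\dbf(A,x\Lambda+h-1)=x\Lambda+h\); your explicit check that each clause's restriction of \((\alpha,\beta)\) matches exactly one of its seven patterns is simply a more detailed version of the paper's appeal to the clause construction.
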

                          
\begin{proof}
    Suppose a feasible selected subset \(A\) meets the value target \(K\). By Lemmas~\ref{lem:mandatory},~\ref{lem:squeeze}, and~\ref{lem:extract}, \(A\) selects all mandatory tasks and exactly one selector for every existential variable. Let \(\alpha\) be the extracted assignment to \(X\).
                              
    \noindent\textbf{From a losing existential assignment to a DBF violation.}
    Assume for contradiction that \(\alpha\) is not winning. Then some assignment \(\beta\) to \(Y\) satisfies \[ F(\alpha,\beta)=\mathrm{true}. \] For each \(x_i\), require both coordinate residues to be Boolean and impose \(\sigma_i(x)=(1,0)\) when \(\alpha(x_i)=\mathrm{true}\) and \(\sigma_i(x)=(0,1)\) otherwise. For each \(y_j\), impose the Boolean residue encoding \(\beta(y_j)\). These conditions fix one residue modulo each distinct coordinate prime, so the CRT gives a timestamp \(x\) satisfying all of them simultaneously.
                                
    Consider the early window \[ \ell=x\Lambda+h-1. \] The dummy anchor contributes its exact spike \(W_0\). Every selected optional selector matches its max residue. By the equality cases following Lemma~\ref{lem:valid}, each existential variable's matching helper and selected selector attain an actual local spike of \(S\). The \(X\)-layer spike is therefore exactly \(nS\).

    By the \(Y\)-validity construction in Section~\ref{sec:construction}, exactly one validity gadget matches for each encoded \(y_j\), giving a total spike of \(u\). Because \((\alpha,\beta)\) satisfies every clause of \(F\), the clause construction in Section~\ref{sec:construction} makes exactly one satisfying pattern match for each clause, giving a total clause spike of \(m\). Therefore the actual spike beyond the baseline \(x\Lambda\) is \[ W_0+nS+u+m=h. \] All selected selectors match at this timestamp, so the selector bounds in Lemma~\ref{lem:gadget-demand} are equalities. That lemma therefore gives
    \[ \dbf(A,\ell) = x\Lambda+h = \ell+1 > \ell. \] This is a DBF violation and contradicts feasibility. Hence no losing response \(\beta\) exists, and \(\alpha\) is winning.
\end{proof}

This completes the logical part of the reduction. The remaining task is to verify that the CRT construction does not cause an exponential increase in the numerical parameters or number of tasks.
                                     
\section{Polynomial Boundedness and Unary Encoding}
\label{sec:boundedness}

Global CRT timestamps may have exponentially large magnitude, but they are not task parameters. The following lemma bounds the explicit task expansion and all generated integers.

\begin{lemma}[Polynomial construction]
\label{lem:bounded-construction}
The construction takes polynomial time and produces polynomially many tasks satisfying \(C_i\le D_i\le T_i\). Every generated \(C_i,D_i,T_i,V_i,K\) has magnitude polynomial in the source encoding length, so the construction is polynomial under binary or unary output encoding.
\end{lemma}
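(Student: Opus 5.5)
The plan is to bound, in order, the primes, the gadget moduli, the task count, the timing constants, and finally the individual task parameters. Let $N$ denote the source encoding length, so $n,u,m\le N$.

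\textbf{Primes, moduli, and task count.} The construction uses $2n+u$ coordinate primes. By the prime number theorem (or the elementary bound $p_k=O(k\log k)$), the first $2n+u$ primes are all at most $p_{\max}=O(N\log N)$. They can be found by trial division in polynomial time. Every pattern is local in the sense of Definition~\ref{def:local-pattern}. An $X$-helper uses two primes, a $Y$-validity pattern uses one, and a clause pattern uses at most six. Hence each modulus satisfies $b_\rho\le p_{\max}^6$, which is polynomial. The number of patterns is $4n+2u+7m$. Therefore
\[
M=1+\sum_{\rho\in\mathcal P}b_\rho=O\bigl(N\,p_{\max}^6\bigr)
\]
is polynomial, and $|\Gamma|=M+2n$. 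Each residue $a_\rho$ is computed by the constructive CRT on constant-size products, which takes polynomial time.

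\textbf{Constants.} The constants are $q=m+1$, $S=m+2$, and $B_{\mathrm{mand}}=n(S+3)+2u+7m$, all $O(N^2)$. Next, $W_0=1+qp_{\max}+S$ is polynomial. This gives $\Lambda=W_0+B_{\mathrm{mand}}+nq$ and $h=W_0+nS+u+m$, both polynomial. The values are bounded by $V_0=2n+1$ and $K=MV_0+n$, which are polynomial as well. Every period is either $b\Lambda\le p_{\max}^6\Lambda$ or $p\Lambda$. Every deadline is at most the corresponding period, and every execution time is at most $\max(W_0,\,S,\,qp_{\max})$. So all parameters are polynomially bounded, and listing them in unary or binary takes polynomial time.

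\textbf{Checking $C\le D\le T$.} This is the main obstacle, and it relies on the constraint $2\le h\le\Lambda$ from Section~\ref{sec:gadgets}.

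First, $h\ge W_0\ge 2$. Also $h\le\Lambda$, because $nS+u+m\le B_{\mathrm{mand}}+nq$, which can be checked term by term.

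Deadlines satisfy $D\le T$. In $G(a,b,w)$, the non-special deadlines are $r\Lambda\le b\Lambda$, and the special deadline is $a\Lambda+h-1<(a+1)\Lambda\le b\Lambda$. In $H$, we have $(p-1)\Lambda+h-1<p\Lambda$.

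For $C\le D$, the delicate case is the special task with $a=0$, whose deadline is only $h-1$. Its execution time $w$ is at most $\max(W_0,S)=W_0$ (this includes the anchor, with $w=W_0$). Here I would use $h-1=W_0+nS+u+m-1\ge W_0$, which holds since $n\ge1$. Non-special tasks have $D\ge\Lambda\ge W_0\ge w$.

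For a selector $H(p,q)$, we have $C=qp\le qp_{\max}<W_0\le h-1\le D$. This step is exactly why $W_0$ contains the term $qp_{\max}$.

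Together these checks establish the lemma for both unary and binary output encodings.
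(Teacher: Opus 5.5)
Your proposal is correct and follows essentially the same route as the paper: you bound $p_{\max}$ and the local moduli to get a polynomial task count, note that all constants are polynomial, and then verify $C\le D\le T$ using $h\le\Lambda$, $h-1\ge W_0\ge w$ for full gadgets, and $qp_{\max}<W_0$ for selectors. The only cosmetic differences are that you check $h\le\Lambda$ term by term instead of through the paper's identity $\Lambda-h=3n+u+6m+nq$, and that special tasks with $a\ge1$, which you do not mention separately, are covered by the same bound $D\ge h-1$.
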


\begin{proof}
Put \(k=2n+u\). The first \(k\) primes can be enumerated in polynomial time and satisfy \(p_{\max}=O(k\log k)\) \citep{rosser1962primes}. An \(X\)-helper reads two coordinates, a \(Y\)-validity pattern one, and a clause pattern at most six. Their moduli are therefore bounded by \(p_{\max}^2,p_{\max},p_{\max}^6\), respectively. Since a full gadget contains one task per modulus position, the total task count is at most
\[ 1+4n p_{\max}^2+2u p_{\max}+7m p_{\max}^6+2n, \] 
which is polynomial in the source encoding length.

The definitions make \(q,S,W_0,B_{\mathrm{mand}},\Lambda,h\) polynomially bounded. Full-gadget parameters are \(C=w,T=b\Lambda\), and either \(D=r\Lambda\) or \(D=a\Lambda+h-1\), where \(a<b\), \(1\le r\le b\le p_{\max}^6\), and \(w\le\max\{W_0,S\}\). Selector parameters are \((qp,(p-1)\Lambda+h-1,p\Lambda)\), with \(p\le p_{\max}\). Therefore all timing parameters have polynomial magnitude. Values are \(V_0=2n+1\) or \(1\), and \(K=MV_0+n\) is polynomially bounded by the task-count bound.

The identity \(\Lambda-h=3n+u+6m+nq\ge0\) gives \(h\le\Lambda\). Since \(n\ge1\), \(S\ge2\), and \(W_0\ge1\), also \(h\ge2\) and \(h-1\ge\max\{W_0,S\}\). Every full-gadget deadline is at least \(h-1\), so \(C\le D\). Non-special deadlines satisfy \(D=r\Lambda\le b\Lambda=T\); special deadlines satisfy \(D=a\Lambda+h-1<(a+1)\Lambda\le b\Lambda=T\). For a selector, \(p\ge2\) gives 
\[ C=qp\le q p_{\max}<W_0\le\Lambda\le D<T=p\Lambda. \]
Therefore every generated task satisfies \(C\le D\le T\).

Computing the local CRT residues and expanding these polynomially bounded gadgets takes polynomial time. Since every output integer has polynomial magnitude, the same holds for unary output.
\end{proof}

The bound can still be large, though: the clause layer contains at most \(7m p_{\max}^6=O(m[(2n+u)\log(2n+u)]^6)\) tasks. The construction is only for the complexity proof and is not intended as a practical instance generator.

\section{Main Result}
\label{sec:main-result}

\begin{theorem}
\label{thm:completeness}
\(\VTS\) is \(\Sigma_2^P\)-complete for both constrained and arbitrary deadlines, even when all numerical parameters are encoded in unary.
\end{theorem}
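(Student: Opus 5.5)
The plan is to assemble the completeness theorem from the membership result and the reduction lemmas already established. The proof splits into an upper bound and a lower bound.

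\textbf{Upper bound.} Theorem~\ref{thm:membership} places \(\VTS\) in \(\Sigma_2^P\) for both constrained and arbitrary deadlines. I would observe that this holds for either input encoding. A unary-encoded instance is at least as long as its binary re-encoding, and the conversion takes polynomial time. So the certificate lengths and verification times in that proof stay polynomial in the unary input length. Membership therefore holds in the unary setting as well.

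\textbf{Lower bound.} I would reduce from \(\exists\forall\)-3DNF, which is \(\Sigma_2^P\)-complete in the normal form justified in Appendix~\ref{app:normal-form}. Given \(\exists X\forall Y\,\varphi\), the reduction first pads with an unused existential variable if \(n=0\). It then forms \(F=\neg\varphi\) in 3-CNF, which takes linear time by De Morgan. Finally it outputs \((\Gamma,K)\) as in Section~\ref{sec:construction}. Correctness is the equivalence of the two instances. Lemma~\ref{lem:forward} gives the direction that a true formula yields a yes-instance. Lemma~\ref{lem:reverse} gives the converse; it already incorporates the structural Lemmas~\ref{lem:mandatory}, \ref{lem:squeeze}, \ref{lem:lock} and~\ref{lem:extract}.

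\textbf{Complexity of the reduction.} Lemma~\ref{lem:bounded-construction} shows that the map runs in polynomial time and produces polynomially many tasks. It also shows that every \(C_i,D_i,T_i,V_i,K\) has polynomial magnitude, so writing the instance in unary is still polynomial. The same lemma gives \(C_i\le D_i\le T_i\), so every output is a constrained-deadline instance. This yields \(\Sigma_2^P\)-hardness for the constrained case under unary encoding. Since every constrained-deadline instance is also an arbitrary-deadline instance, the identity map transfers hardness to the arbitrary case. Combining hardness with membership gives completeness in all four combinations of deadline model and encoding.

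\textbf{Main obstacle.} Very little is new at this point. The one point I would check carefully is that the reduction needs only polynomial time, not only polynomial output size. The CRT timestamps used in Lemmas~\ref{lem:lock} and~\ref{lem:reverse} can be exponentially large, but they appear only inside the correctness arguments and are never output. The residues that are actually computed, \(a_\rho<b_\rho\le p_{\max}^6\), are small. I would state this explicitly and then conclude.
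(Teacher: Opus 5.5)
Your proposal is correct and matches the paper's proof step for step. Membership comes from Theorem~\ref{thm:membership}, equivalence from Lemmas~\ref{lem:forward} and~\ref{lem:reverse}, and polynomial size, unary-boundedness and \(C_i\le D_i\le T_i\) from Lemma~\ref{lem:bounded-construction}; hardness transfers to arbitrary deadlines because constrained-deadline instances are already arbitrary-deadline instances. Your two extra remarks make explicit what the paper leaves implicit: unary membership follows by converting to binary in polynomial time, and the exponentially large CRT timestamps occur only in the correctness argument and are never output.
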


\begin{proof}
Membership follows from Theorem~\ref{thm:membership}, including unary encoding. For hardness, reduce from the source problem of Section~\ref{sec:reduction-overview} using the construction of Section~\ref{sec:construction}. Lemmas~\ref{lem:forward} and~\ref{lem:reverse} establish logical equivalence. Lemma~\ref{lem:bounded-construction}
establishes a polynomial-time construction with valid constrained-deadline tasks, polynomial task count, and polynomial numerical parameters, so the reduction is polynomial under binary or unary encoding. This proves constrained-deadline completeness. The same instances are allowed in the arbitrary-deadline model, establishing hardness there as well.
\end{proof}

\section{Discussion}
\label{sec:discussion}

Unless the polynomial hierarchy collapses, \(\VTS\) belongs to neither NP nor coNP. Hardness persists with polynomially bounded numerical parameters, so an exact algorithm polynomial in task count and parameter magnitudes would imply \(\mathrm P=\Sigma_2^P\) and collapse the hierarchy to \(\mathrm P\). The disabled-option MIS model of Section~\ref{sec:related-work} is likewise \(\Sigma_2^P\)-complete. Approximation, online selection, multiprocessor platforms, and tractable restricted families remain separate questions.

\section{Conclusion}
\label{sec:conclusion}

We proved that \(\VTS\) for sporadic tasks on one preemptive processor is \(\Sigma_2^P\)-complete for both constrained and arbitrary deadlines, including unary encoding. CRT demand gadgets represent universal responses, while local selector locks enforce one existential truth choice per variable. The construction uses constrained deadlines and has polynomial task count and numerical parameters.

For comparison, the analogous selection problem under a supplied fixed task-priority order is NP-complete for implicit or constrained deadlines. Membership follows by guessing the selected set and response-time fixed points no larger than the corresponding deadlines; these certificates can be checked in polynomial time. For hardness, assign every task value \(1\) and set \(K=|\Gamma|\), forcing all tasks to be selected. NP-hardness then follows from fixed-priority schedulability
\citep{ekberg2017fixed}.

\section*{Funding}

This work was supported in part by the U.S. National Science Foundation under Grants 2211641, 2246672, 2442078, and 2502855, and by the Swedish Research Council under Grant 2023-04586.

\section*{Author contributions}

The original research problem was conceived and formulated by K.Y. and Z.G. C.H. identified the relevant complexity class, developed the \(\Sigma_2^P\)-completeness proof, and prepared the manuscript. All authors contributed to refining the proof strategy, strengthening the motivation and presentation of the technical results, and clarifying the scope of the mathematical claims. All authors critically reviewed the manuscript and its mathematical arguments, approved the final version, and agreed to be accountable for the integrity of the work.

\section*{Use of generative AI}

OpenAI Codex was used to assist with manuscript proofreading,
LaTeX formatting tweaks, and debugging of text-based explanatory diagrams in
TikZ. The authors reviewed all outputs and take responsibility for the accuracy of all output. No generative image model was used.
All figures are original diagrams compiled from LaTeX/TikZ source. No LLM was used in the generation of the content of the paper itself in any sort of generative context, only proofreading and LaTeX/TikZ debugging.

\appendix

\section{Source Normal Form}
\label{app:normal-form}

The reduction assumes a source instance of the form
\[
  \exists X\forall Y\,\varphi(X,Y),
\]
where \(\varphi\) is a 3-DNF formula whose terms have exactly three literals over distinct variables. This is a standard \(\Sigma_2^P\)-complete normal form \citep{stockmeyer1976ph,wrathall1976complete,schaefer2002compendium}. One route is to start from the complement of the usual \(\forall\exists\)-3CNF problem. Negating the inner CNF produces a DNF, and the leading quantifiers swap to the required \(\exists\forall\) form.
  
Terms with fewer than three literals can be padded using fresh universally quantified variables. If a term \(\theta\) has \(r<3\) literals, introduce \(3-r\) fresh variables \(z_1,\ldots,z_{3-r}\) and replace \(\theta\) by the disjunction over all assignments to those variables:
\[
  \bigvee_{\pi\in\{0,1\}^{3-r}}
   \left(
   \theta \wedge \bigwedge_{j=1}^{3-r} z_j^{\pi_j}
   \right),
\]
where \(z_j^1=z_j\) and \(z_j^0=\neg z_j\). For any assignment to the fresh variables, exactly one pattern \(\bigwedge_j z_j^{\pi_j}\) is true, so the padded disjunction is true exactly when \(\theta\) is true, and the formula is preserved. Contradictory terms are deleted, and duplicate literals inside a term are removed before padding. Degenerate always-true or always-false cases can be mapped explicitly to fixed instances: the one-task system \(\Gamma=\{(1,1,1,1)\}\) with \(K=1\) is a yes-instance, while the same system with \(K=2\) is a no-instance.

\bibliography{vostcds-references}

\end{document}